\newtheorem{definition}{Definition}
\newtheorem{theorem}{Theorem}
\newtheorem{lemma}{Lemma}
\newtheorem*{lemma*}{Lemma}
\newtheorem{corollary}{Corollary}
\newtheorem{proposition}{Proposition}
\newtheorem{remark}{Remark}
\Crefname{corollary}{Cor.}{Cors.}
\Crefname{proposition}{Prop.}{Props.}
\Crefname{theorem}{Th.}{Ths.}
\Crefname{definition}{Def.}{Defs.}
\newcommand\blthanks[2][]{%
  \begingroup
  \renewcommand\thefootnote{}\footnote{%
    \makebox[0pt][r]{%
      \scriptsize #1\hspace{0.5em}%
    }#2%
  }%
  \addtocounter{footnote}{-1}%
  \endgroup
}
\title{Expressiveness of Extended Bounded Response \LTL}
\author{Alessandro Cimatti
\institute{Fondazione Bruno Kessler\\ Trento, Italy}
\email{cimatti@fbk.eu}
\and
Luca Geatti
\institute{Fondazione Bruno Kessler\\ Trento, Italy}
\institute{University of Udine\\ Udine, Italy}
\email{lgeatti@fbk.eu}
\and
Nicola Gigante
\institute{Free University of Bozen-Bolzano\\ Bolzano, Italy}
\email{nicola.gigante@unibz.it}
\and
Angelo Montanari
\institute{University of Udine\\ Udine, Italy}
\email{angelo.montanari@uniud.it}
\and
Stefano Tonetta
\institute{Fondazione Bruno Kessler\\ Trento, Italy}
\email{tonettas@fbk.eu}
}
\begin{document}
\maketitle

\begin{abstract}
  Extended Bounded Response \LTL with Past (\LTLEBRP) is a safety fragment of
  Linear Temporal Logic with Past (\LTLP) 
  that has been recently introduced in the context of reactive synthesis.  The
  strength of \LTLEBRP is a fully symbolic compilation of formulas into
  symbolic deterministic automata.
  Its syntax is organized in four levels. The first three levels feature (a
  particular combination of) future temporal modalities, the last one admits
  only past temporal operators. At the base of such a structuring there are
  algorithmic motivations: each level corresponds to a step of the algorithm
  for the automaton construction.
  The complex syntax of \LTLEBRP made it difficult to precisely
  characterize its expressive power, 
  and to compare it with other \LTLP safety fragments.
\blthanks[]{
  The work was partially supported by the Italian GNCS project \emph{a
  Cognitive frAmework to synTHesize Adaptive contRollers in Safety-crItical
  Scenarios (CATHARSIS)} (L.~Geatti, N.~Gigante and A.~Montanari).}

  In this paper, we first prove that \LTLEBRP is expressively complete with
  respect to the safety fragment of \LTLP, that is, any safety language
  definable in \LTLP can be formalized in \LTLEBRP, and vice versa.
  From this, it follows that \LTLEBRP and \safetyltl are expressively
  equivalent.
  Then, we show that past modalities play an essential role in \LTLEBRP: we
  prove that the future fragment of \LTLEBRP is strictly less expressive than
  full \LTLEBRP.
\end{abstract}


\section{Introduction}
\label{sec:intro}

Linear Temporal Logic (\LTL) was introduced in the late seventies
\cite{pnueli1977temporal} as a modal logic for reasoning over computer
programs, modeling their computations as state sequences (\ie linear orders)
that represent the state a computer program is in at a given time.
\LTL originally used temporal modalities for moving only in the future of
a time point. Later, it turned out that adding modalities for moving in the
past (we refer to this logic as \LTLP) does not add expressive power to \LTL
\cite{lichtenstein1985glory}, but only succinctness \cite{markey2003temporal}.
The definition of the operators in the syntax of \LTLP was proved to be
carefully designed. In fact, Kamp \cite{kamp1968tense} as well as Gabbay \etal
\cite{gabbay1980temporal} proved that the properties that one can formalize in
\LTLP are exactly those definable in the first-order fragment of the
\emph{monadic second-order theory of one successor} (\SoneS, for short), which
is in turn decidable \cite{buchi1960decision,buchi1960weak}.

Among the different properties that one can define in \LTLP, two notable
classes are the set of \emph{safety} and \emph{co-safety} properties.  Safety
properties express the intuitive requirement that \emph{something bad never
happens}, and thus each counterexample of a safety property is finite.
Co-safety properties are duals of safety properties: each state sequence that
satisfies the property has a finite witness.  The safety and co-safety classes
play a crucial role in verification and synthesis, since their main feature of
having \emph{finite} witnesses makes in general the problems much simpler
\cite{kupferman2001model,ZhuTLPV17}.

Several safety fragments of \LTL have been introduced over the years. One of
the most natural examples is \safetyltl
\cite{sistla1994safety,ChangMP92,ZhuTLPV17}.  The \safetyltl logic is defined
as the set of all and only those formulas of \LTL (with only future modalities)
such that, when in negated normal form, do \emph{not} contain existential
temporal operators (like the \emph{until} operator). In \cite{ChangMP92}, Chang
\etal proved that all the safety properties definable in \LTL are expressible
in \safetyltl as well, and \viceversa.

Extended Bounded Response \LTL with Past (\LTLEBRP) is a recently introduced
safety fragment of \LTLP with an efficient reactive synthesis problem. 
In addition to the fact that realizability from \LTLEBRP specifications is
\EXPTIME-complete (while \LTL realizability is \EXPTIME[2]-complete), in
practice realizability and synthesis from \LTLEBRP specifications turned out to
be much more efficient than other approaches~\cite{cimatti2020reactive}.
The syntax of \LTLEBRP is articulated over layers: the first three layers
comprise a combination of future temporal modalities, while the last layer
includes only past temporal operators. Each of the layers was carefully
designed in order to correspond to a step of the algorithm for constructing
a symbolic automaton starting from an \LTLEBRP specification.
This results into a great performance improvement in practice, but the syntax
of \LTLEBRP makes it hard to find its exact expressive power, 
and, consequently, makes it hard also to compare it with other safety
fragments of \LTLP, like, for instance, \safetyltl.

In this paper we prove that \LTLEBRP is expressively complete with respect to
the safety fragment of \LTLP. As a by-product, we obtain that \LTLEBRP and
\safetyltl are expressively equivalent.
The core of the proof exploits a \emph{normal form theorem} for each safety
property definable in \LTLP \cite{thomas1988safety,ChangMP92}, which
establishes a correspondence between safety properties definable in \LTLP and
properties of the form $\ltl{G \alpha}$, where $\ltl{G}$ is the \emph{globally}
operator of \LTL and $\alpha$ is a pure past formula.
Consequently, it is clear that the \emph{pure past layer} of \LTLEBRP
plays a crucial role for the expressive equivalence of \LTLEBRP. 
We show that this layer is really necessary. In fact, we prove that \LTLEBR,
that is \LTLEBRP devoid of the pure past layer, is \emph{strictly less}
expressive than full \LTLEBRP.
This is shown by proving that all the formulas of \LTLEBR can constrain, for
any time point $i$ in an infinite state sequence, only a \emph{bounded} prefix
before (or interval around) $i$.
This implies that formulas that are able to constrain, for each time point $i$,
a prefix of unbounded (although finite) length before $i$, like for instance
$\ltl{G(p_1 \to H p_2)}$ (where $\ltl{H}$ is the \emph{historically} past
operator of \LTLP), are \emph{not} definable in \LTLEBR.

The rest of the paper is organized as follows.  In \cref{sec:prelim}, we give
the necessary background.  The expressive power of \LTLEBRP is proved in
\cref{sec:ltlebr}.  In \cref{sec:futureltlebr}, we prove that the future
fragment of \LTLEBRP is strictly less expressive than \LTLEBRP. Finally, we
summarize the results of the paper in \cref{sec:conc}.


\section{Preliminaries}
\label{sec:prelim}

In this section, we give the definitions that are necessary throughout the
paper.

\subsection{Linear Temporal Logic}

Linear Temporal Logic (\LTL) is a modal logic interpreted over infinite,
discrete linear orders \cite{pnueli1977temporal,demri2016temporal}.
Syntactically, \LTL can be seen as an extension of propositional logic with the
addition of  the \emph{next} operator ($\ltl{X \phi}$, \ie, at the \emph{next}
state $\phi$ holds) and the \emph{until} operator ($\ltl{\phi_1 U \phi_2}$, \ie,
$\phi_2$ will eventually hold and $\phi_1$ will hold \emph{until} then).

\emph{\LTL with Past} (\LTLP) extends \LTL with the addition of temporal
operators able to talk about what happened in the \emph{past} with respect to
the current time, and it is obtained from \LTL by adding the following
\emph{past} temporal operators:
\begin{enumerate*}[label=(\roman*)]
  \item the \emph{yesterday} operator ($\ltl{Y\phi}$, \ie there exists a 
    \emph{previous} state in which $\phi$ holds);
  \item the \emph{weak yesterday} operator ($\ltl{Z\phi}$, \ie either a 
    previous state does not exists or in the previous state $\phi$ holds);
  \item and the \emph{since} operator ($\ltl{\phi_1 S \phi_2}$, \ie there was
    a past state where $\phi_2$ held, and $\phi_1$ has held \emph{since} then).
\end{enumerate*}
We will now briefly recall the syntax and semantics of \LTLP, which encompasses
that of \LTL as well.  Formally, given a set $\Sigma$ of proposition letters,
\LTLP formulas over $\Sigma$ are generated by the following grammar:\fitpar
\begin{align*}
\phi \bydef p
  & \choice \ltl{\lnot\phi}
    \choice \ltl{\phi_1 || \phi_2}
    \choice \ltl{\phi_1 && \phi_2} & \text{propositional connectives} \\
  & \choice \ltl{X\phi_1}
    \choice \ltl{\phi_1 U \phi_2}
    \choice \ltl{\phi_1 R \phi_2}
    \choice \ltl{F \phi_1}
    \choice \ltl{G \phi_1}  & \text{\emph{future} temporal operators} \\
  & \choice \ltl{Y \phi_1}
    \choice \ltl{\phi_1 S \phi_2}
    \choice \ltl{\phi_1 T \phi_2}
    \choice \ltl{O \phi_1}
    \choice \ltl{H \phi_1}
    \choice \ltl{Z \phi_1} & \text{\emph{past} temporal operators}
\end{align*}
where $p\in\Sigma$ and $\phi_1$ and $\phi_2$ are \LTLP formulas. Most of the
temporal operators of the language can be defined in terms of a small number of
basic ones. We refer to~\cite{cimatti2020reactive} for the definition of these
shortcuts.
%
%
We say that an \LTLP formula is \emph{pure past} if and only if all the
temporal operators inside the formula are past operators.  We call \emph{pure
past \LTL}, written as \LTLFP, the fragment of \LTLP containing only pure past
formulas.

Formulas from \LTLP are interpreted over \emph{state sequences}. A \emph{state
sequence} $\sigma=\seq{\sigma_0,\sigma_1,\dots}$ is an \emph{infinite},
linearly ordered sequence of \emph{states}, where each state $\sigma_i$ is
a set of proposition letters, that is $\sigma_i \in 2^\Sigma$ for $i \in \N$.
We will interchangeably use also the term \emph{$\omega$-word} over the
alphabet $2^\Sigma$ for referring to a state sequence. A set of $\omega$-words
is called \emph{$\omega$-language}.
Given two indices $i,j \in \mathbb{Z}$, with $i \le j$, we denote as $\sigma_{[i,j]}$
the interval of $\sigma$ from index $i$ to index $j$, that is
$\seq{\sigma_{i},\dots,\sigma_{j}}$ if $i \ge 0$, or
$\seq{\sigma_{0},\dots,\sigma_{j}}$ otherwise.
With $\sigma_{[i,\infty]}$ we denote the (infinite) suffix of $\sigma$ starting
from $i$.

Given a state sequence $\sigma$, a position $i\ge 0$, and an \LTLP
formula $\phi$, we inductively define the \emph{satisfaction} of $\phi$ by
$\sigma$ at position~$i$, written as $\sigma,i\models\phi$, as follows:
\begin{conditions}
\item $\sigma,i \models p$                 & $p\in\state_i$;
\item $\sigma,i \models \ltl{\lnot\phi}$       & $\sigma,i \not\models \phi$;
\item $\sigma,i \models \ltl{\phi_1 || \phi_2}$  &
        $\sigma,i \models \phi_1$ or $\sigma,i \models \phi_2$;
\item $\sigma,i \models \ltl{X\phi}$     & $\sigma,i+1\models \phi$;
\item $\sigma,i \models \ltl{Y\phi}$    &
        $i > 0$ and $\sigma,i-1\models \phi$;
\item $\sigma,i \models \ltl{\phi_1 U \phi_2}$  &
        there exists $j\ge i$ such that $\sigma,j\models\phi_2$,\newline
        and $\sigma,k\models\phi_1$ for all $k$, with $i \le k < j$;
\item $\sigma,i \models \ltl{\phi_1 S \phi_2}$    &
        there exists $j\le i$ such that $\sigma,j\models\phi_2$,\newline
        and $\sigma,k\models\phi_1$ for all $k$, with $j < k \le i$;
\end{conditions}
We say that $\sigma$ \emph{satisfies} $\phi$, written as $\sigma\models\phi$,
if it satisfies the formula at the first state, \ie if $\sigma,0\models\phi$:
in this case, we call $\sigma$ a \emph{model} of $\phi$. 
We say that two formulas $\phi$ and $\psi$ are \emph{equivalent}
($\phi\equiv\psi$) if and only if they are satisfied by the same set of state
sequences.

If $\phi$ is a full \LTLP formula, then we define the \emph{language} of
$\phi$, written $\lang(\phi)$, as $\lang(\phi) = \{ \sigma \in
(2^\Sigma)^\omega \suchthat \sigma \models \phi \}$. 
%
%
If, instead, $\phi$ contains only past operators, we change the definition of
\emph{language} as follows: for all $\phi \in \LTLFP$, we define the language
over \emph{finite words} of $\phi$ as $\langfin(\phi) \coloneqq \set{\sigma \in
(2^\Sigma)^* \suchthat \sigma = \seq{\sigma_0,\dots,\sigma_n} \land \sigma,n
\models \phi}$.

\paragraph{Notation}
From now on, given a linear temporal logic $\mathbb{L}$, with some
abuse of notation, we will denote with
$\mathbb{L}$ also the set of formulas that \emph{syntactically} belong to
$\mathbb{L}$.  Conversely, we denote with $\sem{\mathbb{L}}$ the set of all and
only those languages $\lang$ of infinite words for which there exists a formula
$\phi \in \mathbb{L}$ (\ie $\phi$ syntactically belongs to $\mathbb{L}$) such
that $\lang = \lang(\phi)$.
For the \LTLFP logic, we write $\semfin{\LTLFP}$ for denoting the set of
languages $\lang$ over \emph{finite} words such that $\lang = \langfin(\phi)$
for some $\phi \in \LTLFP$.

It is known that past modalities do \emph{not} add expressive power to \LTL
\cite{lichtenstein1985glory,gabbay1980temporal,markey2003temporal}, therefore
writing $\sem{\LTL}$ is the same as writing $\sem{\LTLP}$.

\subsection{$\omega$-regular expressions and (co-)Safety classes}

We denote as $\REG$ the set of \emph{regular languages} of finite words
\cite{hopcroft2001introduction}.  
An \emph{$\omega$-regular language} is a set of $\omega$-words
recognized by an $\omega$-regular expression, that is, an expression of the
form $\bigcup_{i=1}^{n} U_i \cdot (V_i)^\omega$, where $n \in \N$ and $U_i,V_i
\in \REG$ for $i=1,\dots,n$.  With \omegaREG, we denote the set of the
$\omega$-regular languages.  One of the seminal results in automata theory is
the correspondence between $\omega$-regular languages and B\"uchi
automata~\cite{buchi1960weak,buchi1960decision}.
An important class of $\omega$-regular languages comprises those languages that
express the fact that something ``bad'' (like for instance a deadlock, or
a simultaneous access into a critical section by two different processes) never
happens. For this reason, they are called \emph{safety languages} (or
\emph{safety properties}).
\begin{definition}[Safety language \cite{kupferman2001model}]
\label{def:safelang}
  Let $\lang \subseteq \Sigma^\omega$ be an $\omega$-regular language. We say
  that $\lang$ is a \emph{safety language} if and only if for all the words
  $\sigma \in \Sigma^\omega$ it holds that, if $\sigma \not \in \lang$, then
  $\exists i \in \N \suchdot \forall \sigma^\prime \in \Sigma^\omega \suchdot
  \sigma_{[0,i]}\cdot\sigma^\prime \not \in \lang$.  The class of safety
  $\omega$-regular languages is denoted as \SAFETY.
\end{definition}
Given some temporal logic $\mathbb{L}$, we say that $\mathbb{L}$ is
a \emph{safety fragment} of $\LTL$ iff
$\phi \in \mathbb{L}$ implies that $\phi \in \LTL$, and $\lang(\phi)$ is
a safety language (\cref{def:safelang}), for all formulas $\phi$.
The class of the $\omega$-regular \emph{co-safety} languages, that we call
\coSAFETY, is defined as the dual of \SAFETY, that is the set of languages
$\lang$ such that $\lang \in \coSAFETY$ iff $\bar{\lang} \in \SAFETY$, where
$\bar{\lang}$ is the complement language of $\lang$.

The \safetyltl logic \cite{sistla1994safety,ZhuTLPV17,ChangMP92} is
defined as the set of \LTL formulas such that, when in negated normal form, do
\emph{not} contain existential temporal operators (\ie $\ltl{U}$ and
$\ltl{F}$). \safetyltl is a \emph{safety fragment} of \LTL
\cite{sistla1994safety}.

We give an alternative and equivalent definition of the \SAFETY class of
\cref{def:safelang}, that will be useful in the following sections: $\SAFETY
\coloneqq \set{\lang \subseteq \Sigma^\omega \suchthat \bar{\lang} = K \cdot
\Sigma^\omega \land K \in \REG}$.

We define the class \SAFETYsf (\coSAFETYsf) as the set obtained from \SAFETY
(resp. \coSAFETY) by restricting $K$ to be a \emph{star-free} expression, that
is, a regular expression devoid of the Kleene star
\cite{mcnaughton1971counter}.  
In particular, $\coSAFETYsf \coloneqq \set{\lang \subseteq \Sigma^\omega
\suchthat \lang = K \cdot \Sigma^\omega \land K \in \SF}$, where $\SF \subseteq
\REG$ is the set of star-free regular expressions.
With \omegaSF we denote the set of star-free $\omega$-regular expressions.
We now state some equivalence results that will be helpful later.  Star-free
expressions (\SF) and pure-past \LTL (\LTLFP) have the same expressive power.
The same holds for the \omegaSF class and \LTL.
\begin{proposition}[Thomas \cite{thomas1988safety}, Lichtenstein \etal \cite{lichtenstein1985glory}]
\label{prop:ltlfpsf}
$\semfin{\LTLFP} = \SF$ and $\sem{\LTL} = \omegaSF$.
\end{proposition}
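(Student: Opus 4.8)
The plan is to obtain both identities from two classical cornerstones that the paper already relies on: the McNaughton--Papert theorem, which identifies the star-free regular languages of finite words with the first-order definable ones, and Kamp's theorem, which identifies the expressive power of \LTL (and of \LTLP) with that of the first-order fragment of \SoneS. I would treat the two equalities separately.

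For $\semfin{\LTLFP} = \SF$, the first step is the finite-word incarnation of those theorems: over finite words the star-free languages coincide with the first-order definable ones, which in turn coincide with the languages definable by a future-only \LTL formula evaluated at the \emph{first} position. The remaining gap is that $\langfin$ evaluates a pure-past formula at the \emph{last} position, and I would bridge it by a \emph{mirroring} argument. For a finite word $w=\seq{\sigma_0,\dots,\sigma_n}$ let $w^R=\seq{\sigma_n,\dots,\sigma_0}$, and for a pure-past formula $\phi$ let $\phi^R$ be the future formula obtained by replacing each past operator with its future mirror: $\ltl{Y}$ with $\ltl{X}$, $\ltl{S}$ with $\ltl{U}$, $\ltl{T}$ with $\ltl{R}$, $\ltl{O}$ with $\ltl{F}$, $\ltl{H}$ with $\ltl{G}$, and $\ltl{Z}$ with the weak next operator (definable in \LTL as $\ltl{\lnot X\lnot\phi}$). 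A routine induction on the structure of $\phi$ yields $w,n\models\phi$ iff $w^R,0\models\phi^R$, hence $\langfin(\phi)=\bigl(\langfin(\phi^R)\bigr)^R$. Since mirroring takes pure-past formulas onto all future-\LTL formulas up to equivalence, the set $\set{\langfin(\phi^R)\mid \phi\in\LTLFP}$ is exactly the class of finite-word future-\LTL languages, i.e.\ $\SF$; therefore $\semfin{\LTLFP}=\set{L^R\mid L\in\SF}$, which equals $\SF$ because $\SF$ is closed under reversal (reversing a star-free expression produces a star-free expression, by a straightforward induction on the expression).

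For $\sem{\LTL} = \omegaSF$, I would assemble off-the-shelf results. Kamp's theorem over $\omega$-words gives that $\sem{\LTL}$ is exactly the class of $\omega$-languages definable in the first-order fragment of \SoneS. The theory of aperiodic $\omega$-regular languages (Thomas; Perrin--Pin) gives that an $\omega$-regular language is first-order definable if and only if it is aperiodic (counter-free), if and only if it admits a representation $\bigcup_{i=1}^{n} U_i\cdot(V_i)^\omega$ in which every $U_i$ and $V_i$ is star-free --- that is, if and only if it belongs to \omegaSF. Chaining these equivalences gives $\sem{\LTL}=\omegaSF$; and since past modalities do not add expressive power, the same holds with \LTLP in place of \LTL.

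I expect the delicate point to be the mirroring lemma of the first part: one must fix precise conventions for the behaviour of the weak past operators at the two endpoints of a finite word --- most visibly $\ltl{Z}$ at position $0$, which has to correspond to the weak next operator at position $n$ of the reversed word --- and check that the dualization is faithful to them, and one must be explicit (though the argument is elementary) about the closure of $\SF$ under reversal. The $\omega$-word identity, by contrast, is a direct composition of known theorems.
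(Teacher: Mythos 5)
The paper does not prove this proposition at all---it is imported as a known result from Thomas \cite{thomas1988safety} and Lichtenstein \etal \cite{lichtenstein1985glory}---and your reconstruction follows precisely the classical route those sources take: McNaughton--Papert and Kamp plus a mirroring/reversal argument (using closure of \SF under reversal) for $\semfin{\LTLFP}=\SF$, and Kamp composed with the aperiodicity/star-free characterization of first-order definable $\omega$-regular languages for $\sem{\LTL}=\omegaSF$. Your sketch is correct; the only residual bookkeeping is the endpoint and empty-word conventions you already flag (e.g.\ $\langfin$ ranges over nonempty words and weak operators at the borders), which do not affect the substance.
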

Finally, we will use the following normal-form theorem, stated in
\cite{ChangMP92}, that proves that any \LTL-definable safety (resp. co-safety)
language can be expressed by a formula of the form $\ltl{G \alpha}$ (resp.
$\ltl{F \alpha}$), and \viceversa. An independent proof of this theorem can be
derived also from the results by Thomas in \cite{thomas1988safety}.
\begin{theorem}[Chang \etal \cite{ChangMP92}]
\label{th:normalformgalpha}
$\sem{\LTL} \cap \SAFETY = \sem{\ltl{G\alpha}}$ and $\sem{\LTL} \cap \coSAFETY
  = \sem{\ltl{F\alpha}}$.
\end{theorem}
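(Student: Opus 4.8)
The plan is to route each identity through the intermediate class $\SAFETYsf$, and to obtain the co-safety statements from the safety ones by complementation. Concretely, I would prove the two equalities $\SAFETYsf = \sem{\ltl{G\alpha}}$ and $\sem{\LTL} \cap \SAFETY = \SAFETYsf$, which together give $\sem{\LTL} \cap \SAFETY = \sem{\ltl{G\alpha}}$. For the co-safety half, note that $\ltl{F\alpha} \equiv \ltl{\lnot G \lnot\alpha}$ and that $\ltl{\lnot\alpha}$ is pure past whenever $\alpha$ is; hence $\sem{\ltl{F\alpha}}$ is exactly the set of complements of languages in $\sem{\ltl{G\alpha}}$. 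Since $\LTL$ is closed under negation (so $\sem{\LTL}$ is closed under complement) and complementation swaps $\SAFETY$ with $\coSAFETY$, complementing both sides of the already-proved safety identity yields $\sem{\LTL} \cap \coSAFETY = \sem{\ltl{F\alpha}}$.

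For $\SAFETYsf = \sem{\ltl{G\alpha}}$ I would argue directly from \cref{prop:ltlfpsf}. The key remark is that, since $\ltl{\lnot\alpha}$ is pure past, whether $\sigma,i \models \ltl{\lnot\alpha}$ depends only on the finite prefix $\sigma_{[0,i]}$; therefore $\sigma \not\models \ltl{G\alpha}$ iff some prefix of $\sigma$ belongs to $\langfin(\ltl{\lnot\alpha})$, that is, $\overline{\lang(\ltl{G\alpha})} = \langfin(\ltl{\lnot\alpha}) \cdot \Sigma^\omega$. Now, if $\lang \in \SAFETYsf$, write $\overline{\lang} = K \cdot \Sigma^\omega$ with $K \in \SF$; by \cref{prop:ltlfpsf} there is $\beta \in \LTLFP$ with $\langfin(\beta) = K$, and then $\alpha \coloneqq \ltl{\lnot\beta} \in \LTLFP$ satisfies $\lang(\ltl{G\alpha}) = \lang$. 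Conversely, for any pure past $\alpha$, the identity above together with $\langfin(\ltl{\lnot\alpha}) \in \SF$ (again \cref{prop:ltlfpsf}) shows $\lang(\ltl{G\alpha}) \in \SAFETYsf$.

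For $\sem{\LTL} \cap \SAFETY = \SAFETYsf$, the inclusion $\SAFETYsf \subseteq \sem{\LTL} \cap \SAFETY$ is routine: $\SF \subseteq \REG$ gives $\SAFETYsf \subseteq \SAFETY$, while $\overline{\lang} = K \cdot \Sigma^\omega$ with $K$ star-free is a star-free $\omega$-regular expression, so $\overline{\lang} \in \omegaSF = \sem{\LTL}$ and hence $\lang \in \sem{\LTL}$ by closure under complement. The reverse inclusion $\sem{\LTL} \cap \SAFETY \subseteq \SAFETYsf$ is the heart of the matter. Given $\lang \in \sem{\LTL} \cap \SAFETY$, let $K_0 = \{\, u \in \Sigma^* \suchthat u \cdot \Sigma^\omega \cap \lang = \emptyset \,\}$ be the set of bad prefixes; safety (\cref{def:safelang}) gives $\overline{\lang} = K_0 \cdot \Sigma^\omega$, and $K_0 \in \REG$ because $\lang$ is $\omega$-regular. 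It remains to replace $K_0$ by a \emph{star-free} expression. Here I would exploit that $\lang$, being $\LTL$-definable, is recognized by a deterministic $\omega$-automaton whose transition monoid is aperiodic; passing to the quotient transition system that identifies any two finite words $u,u'$ with $u^{-1}\lang = u'^{-1}\lang$ preserves aperiodicity of the transition monoid, and the syntactic monoid of $K_0$ is in turn a quotient of it (since $u^{-1}\lang = u'^{-1}\lang$ makes $u$ and $u'$ indistinguishable for $K_0$), hence also aperiodic, so $K_0 \in \SF$ by Sch\"utzenberger's theorem. This is the ingredient that also underlies the alternative derivation of the theorem from Thomas's results \cite{thomas1988safety}.

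I expect this last step --- transferring aperiodicity from the $\omega$-language $\lang$ down to its finite-word bad-prefix language $K_0$ --- to be the main obstacle, since it is the only point that calls for the theory of syntactic monoids of $\omega$-languages (or an equivalent automata-theoretic argument) rather than just the black-box equivalences recalled in \cref{prop:ltlfpsf}; everything else reduces to bookkeeping with the semantics of $\ltl{G}$ and the pure-past/finite-word correspondence.
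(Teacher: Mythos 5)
The paper never proves \cref{th:normalformgalpha}: it is imported as a known result of Chang \etal \cite{ChangMP92}, with the remark that it can alternatively be derived from Thomas's results \cite{thomas1988safety}. Your proposal is essentially a reconstruction of that Thomas-style derivation, and its architecture is sound: reducing the co-safety half to the safety half by complementation works (pure past formulas are closed under negation, $\sem{\LTL}$ is closed under complement, and complementation swaps $\SAFETY$ with $\coSAFETY$); the identity $\overline{\lang(\ltl{G\alpha})} = \langfin(\ltl{\lnot\alpha})\cdot\Sigma^\omega$ together with \cref{prop:ltlfpsf} does give $\SAFETYsf = \sem{\ltl{G\alpha}}$; and $\overline{\lang}=K_0\cdot\Sigma^\omega$ for the bad-prefix language $K_0$ is exactly \cref{def:safelang}. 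You also correctly identify where the real content lies, namely in showing $K_0\in\SF$.

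Two caveats on that last step. First, the parenthetical justification ``since $u^{-1}\lang = u'^{-1}\lang$ makes $u$ and $u'$ indistinguishable for $K_0$'' does not give what you claim: equality of left quotients is only a right congruence, and it does not yield two-sided substitutability ($xuy\in K_0 \Leftrightarrow xu'y\in K_0$), which is what the syntactic congruence of $K_0$ requires. The correct way to finish is to observe that the quotient transition system is a deterministic, complete, accessible finite automaton recognizing $K_0$ (its unique accepting state is the empty quotient, since $u\in K_0$ iff $u^{-1}\lang=\emptyset$), and the syntactic monoid of a recognized language always divides the transition monoid of such a recognizer; combined with your (correct) remark that this transition monoid is a homomorphic image of that of the original deterministic automaton, the step goes through. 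Second, the fact you ``exploit'' --- that an \LTL-definable $\omega$-language admits a deterministic recognizer with aperiodic transition monoid (equivalently, that its syntactic $\omega$-semigroup is aperiodic, or that it is accepted by a counter-free deterministic Muller automaton) --- is itself a nontrivial classical theorem, and it does not follow from \cref{prop:ltlfpsf} as recalled in the paper; it needs an explicit citation (Thomas, Perrin) or a proof. Since the statement under review is itself a quoted classical result, relying on this is legitimate, but be aware that it is the entire load-bearing ingredient of your argument and is currently asserted rather than sourced.
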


\cref{fig:bigpic} summarizes the expressive power of the various fragments and
logics, included \LTLEBRP and \LTLEBR (that are the subject of this paper).

%
%
\begin{figure}[t!]
\centering
\begin{adjustbox}{minipage=\linewidth,scale=1.0}
\begin{tikzpicture}

    \draw[thick] (0.0,0.0) rectangle (12.0,14.0);
    \draw[thick] (1.0,1.0) rectangle (8.0,7.3);
    \draw[thick] (5.0,6.7) rectangle (11.0,13.0);
    \draw[line width=1.8pt] (2.0,3.0) rectangle (10.0,11.0);
    \draw[thick] (4.7,3.2) rectangle (7.7,4.9);
    
    \node at (7.3,13.5) {\vbox
    {\begin{itemize}
      \item[-] \omegaREG
    \end{itemize}}
    };
    \node at (8.15,1.4) {\vbox
    {\begin{itemize}
      \item[-] \SAFETY
    \end{itemize}}
    };
    \node at (9.15,4.3) {\vbox
    {\begin{itemize}
      \item[-] $\sem{\LTLEBRP}$
      \item[-] $\sem{\safetyltl}$
      \item[-] $\sem{\ltl{G\alpha}}$
      \item[-] \SAFETYsf
    \end{itemize}}
    };
    \node at (15.7,12.6) {\vbox
    {\begin{itemize}
      \item[-] \coSAFETY
    \end{itemize}}
    };
    \node at (14.4,10.3) {\vbox
    {\begin{itemize}
      \item[-] \coSAFETYsf
      \item[-] $\sem{\ltl{F\alpha}}$
    \end{itemize}}
    };
    \node at (9.1,9.9) {\vbox
    {\begin{itemize}
      \item[-] \omegaSF
      \item[-] $\sem{\LTL}$
      \item[-] $\sem{\LTLP}$
    \end{itemize}}
    };
    \node at (11.75,4.5) {\vbox
    {\begin{itemize}
      \item[-] \small $\sem{\LTLEBR}$
    \end{itemize}}
    };

\end{tikzpicture}
  \caption{Comparison of expressiveness between the various formalisms. For
  ease of exposition, we highlighted the rectangle corresponding to \LTL with
  thick borders.}
  \label{fig:bigpic}
\end{adjustbox}
\end{figure}
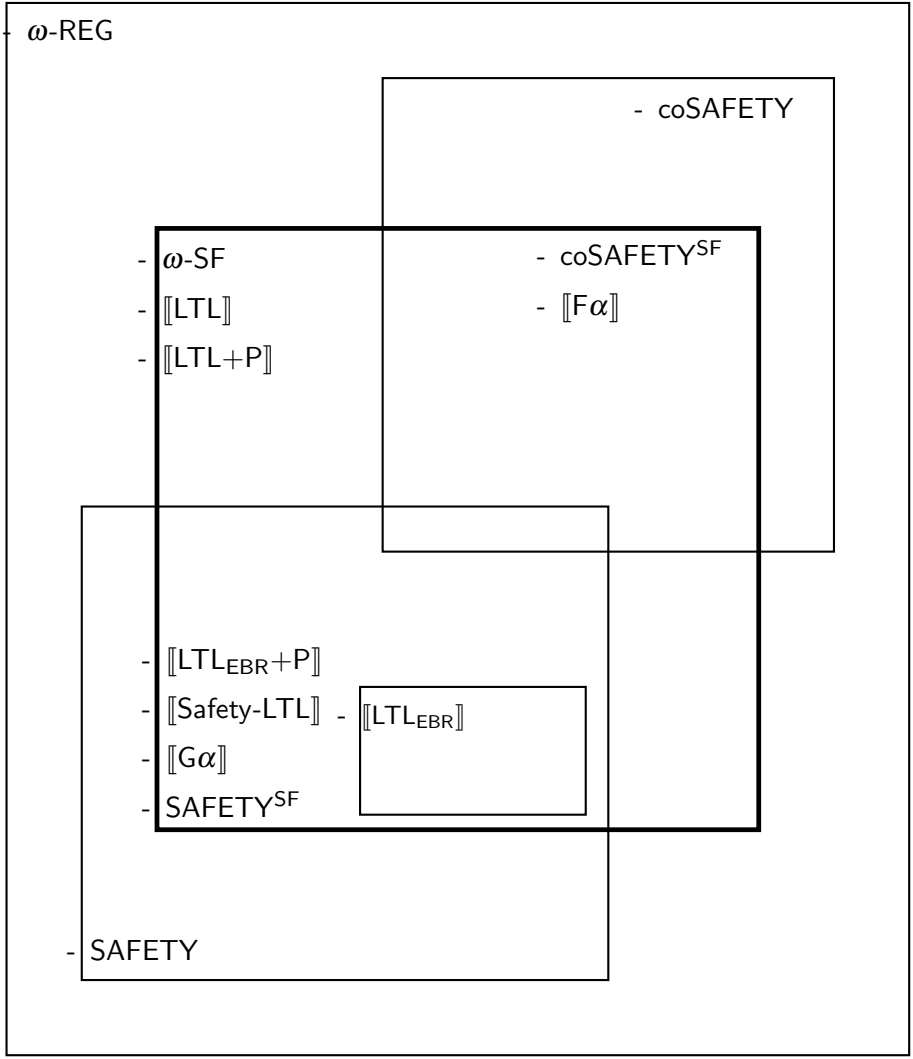

\subsection{Extended Bounded Response \LTL}
\label{sub:ltlebr}

\emph{Extended Bounded Response \LTL with Past} (\LTLEBRP, for short) is
a fragment of \LTLP, 
recently introduced in the context of reactive
synthesis~\cite{cimatti2020reactive}. Here below, we recall its syntax.
\begin{definition}[The logic \LTLEBRP \cite{cimatti2020reactive}]
\label{def:ltlebr}
  Let $a,b \in \N$. An \LTLEBRP formula $\chi$ is inductively defined as
  follows:
  \begin{align*}
    \eta \bydef p 
       &\choice \ltl{\lnot \eta}
       \choice \ltl{\eta_1 || \eta_2}
       \choice \ltl{Y \eta} 
       \choice \ltl{\eta_1 S \eta_2} &
          \text{Pure Past Layer} \\
    \psi \bydef \eta 
       &\choice \ltl{\lnot \psi}
       \choice \ltl{\psi_1 || \psi_2}
       \choice \ltl{X \psi} 
       \choice \ltl{\psi_1 U^{[a,b]}\psi_2} &
          \text{Bounded Future Layer} \\
    \phi \bydef \psi
       &\choice \ltl{\phi_1 \land \phi_2}
       \choice \ltl{X \phi}
       \choice \ltl{G \phi} 
       \choice \ltl{\psi R \phi} &
          \text{Future Layer} \\
    \chi \bydef \phi 
      &\choice \ltl{\chi_1 \lor \chi_2}
      \choice \ltl{\chi_1 \land \chi_2} & 
          \text{Boolean Layer}
  \end{align*}
\end{definition}
We define the \emph{bounded until} operator $\ltl{\psi_1 U^{[a,b]}\psi_2}$ as
a shortcut for the \LTL formula \\ $\bigvee_{i=a}^{b} ( \ltl{X_{1}} \dots
\ltl{X_{i}}(\psi_2) \land \bigwedge_{j=0}^{i-1} \ltl{X_{1}} \dots \ltl{X_{j}}
(\psi_1))$.
This means that \LTLEBRP features really only universal temporal modalities
(\ie $\ltl{X}$, $\ltl{G}$, and $\ltl{R}$), and thus it is a syntactical
fragment of \LTLP and also a \emph{safety} fragment (see Theorem 3.1
in~\cite{sistla1994safety}).
We define \LTLEBR as the fragment of \LTLEBRP devoid of the full past layer.
The syntax of \LTLEBRP is articulated over layers, that impose some syntactical
restrictions on the formulas that can be generated from the grammar. For
example, \LTLEBRP forces the leftmost argument of any \emph{release} operator
to contain no universal temporal modalities (\ie $\ltl{R}$ and $\ltl{G}$).
Originally, the layered structure was guided by the steps of the algorithm for
the construction of symbolic automata starting from \LTLEBRP-formulas. We refer
the reader to \cite{cimatti2020reactive} for more details.

All formulas in \LTLEBRP can be transformed into a \emph{canonical form}
(defined here below) by maintaining the equivalence. \\

\begin{definition}[Canonical Form of \LTLEBRP \cite{cimatti2020reactive}]
\label{def:ltlebrcanonform}
  The \emph{canonical form} of \LTLEBRP is the set of all and only the formulas
  of the following type:
  \begin{align*}
    \ltl{X}^{i_1} \alpha_{i_1} &\otimes \dots \otimes \ltl{X}^{i_j} \alpha_{i_j} \otimes \\
    \ltl{X}^{i_{j+1}} \ltl{G} \alpha_{i_{j+1}} &\otimes \dots \otimes \ltl{X}^{i_k} \ltl{G} \alpha_{i_k} \otimes \\
    \ltl{X}^{i_{k+1}}(\alpha_{i_{k+1}} \ltl{R} \beta_{i_{k+1}}) &\otimes \dots \otimes \ltl{X}^{i_h}(\alpha_{i_h} \ltl{R} \beta_{i_h})
  \end{align*}
  where each $\alpha_i,\beta_i \in \LTLFP$, $\otimes \in \{\land, \lor\}$, and
  $i,j,k,h \in \N$.
\end{definition}
%


\section{Expressive power of \LTLEBRP}
\label{sec:ltlebr}

In this section, we study the expressiveness of the \LTLEBRP logic. In
particular, we compare the set of languages definable in \LTLEBRP
with the set of safety languages expressible in \LTL, and prove that the two
sets are equal, that is 
%
  $\sem{\LTLEBRP} = \sem{\LTL} \cap \SAFETY$. 
%
Consequently, \LTLEBRP and \safetyltl are expressively equivalent
(\ie $\sem{\LTLEBRP} = \sem{\safetyltl}$).


First we recall the normal-form theorem stated in \cref{th:normalformgalpha},
establishing that $\sem{\LTL} \cap \SAFETY = \sem{\ltl{G\alpha}}$.
Proving that $\sem{\LTLEBRP} = \sem{\LTL} \cap \SAFETY$ is straightforward.
In \cite{sistla1994safety}, Sistla proved that any fragment of \LTLP with only
universal (future) temporal operators (\ie $\ltl{X}$, $\ltl{R}$, and $\ltl{G}$)
defines only safety properties, and thus is a safety fragment of \LTLP.
Since \LTLEBRP-formulas contain only universal (future) temporal operators, it
follows that \LTLEBRP is a safety fragment of \LTLP (this corresponds to the
left-to-right direction).  
For the right-to-left direction it suffices to show that the normal form
$\ltl{G\alpha}$ is syntactically definable in \LTLEBRP (\ie $\ltl{G}\alpha \in
\LTLEBRP$ and thus also $\lang(\ltl{G\alpha}) \in \sem{\LTLEBRP}$, for any
$\alpha \in \LTLFP$).
\begin{theorem}
\label{th:ltlebrexpr}
  $\sem{\LTLEBRP} = \sem{\LTL} \cap \SAFETY$.
\end{theorem}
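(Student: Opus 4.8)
The plan is to prove the two inclusions separately, as sketched in the text preceding the statement. For the left-to-right inclusion $\sem{\LTLEBRP} \subseteq \sem{\LTL} \cap \SAFETY$, I would argue as follows. First, every $\LTLEBRP$ formula is an $\LTLP$ formula, and since past operators do not add expressive power to $\LTL$ (\cref{prop:ltlfpsf} and the remark that $\sem{\LTL} = \sem{\LTLP}$), we have $\sem{\LTLEBRP} \subseteq \sem{\LTLP} = \sem{\LTL}$. Second, inspecting \cref{def:ltlebr}, the only temporal operators appearing in $\LTLEBRP$ are $\ltl{X}$, $\ltl{G}$, $\ltl{R}$ (all universal), the bounded until $\ltl{U^{[a,b]}}$ (which unfolds into a finite Boolean combination of $\ltl{X}$'s, hence introduces no existential unbounded operator), and the past operators $\ltl{Y}$, $\ltl{S}$. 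By Sistla's result (Theorem~3.1 in~\cite{sistla1994safety}), any $\LTLP$ fragment built only from such operators defines only safety languages, so $\lang(\chi) \in \SAFETY$ for every $\chi \in \LTLEBRP$. Combining the two facts gives $\sem{\LTLEBRP} \subseteq \sem{\LTL} \cap \SAFETY$.

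For the right-to-left inclusion $\sem{\LTL} \cap \SAFETY \subseteq \sem{\LTLEBRP}$, I would invoke the normal-form theorem \cref{th:normalformgalpha}, which gives $\sem{\LTL} \cap \SAFETY = \sem{\ltl{G\alpha}}$, where $\alpha$ ranges over $\LTLFP$ (pure past $\LTL$) formulas. So it suffices to show that for every $\alpha \in \LTLFP$, the formula $\ltl{G}\alpha$ is (equivalent to a formula) syntactically in $\LTLEBRP$. This is immediate from \cref{def:ltlebr}: the pure past layer $\eta$ is exactly $\LTLFP$ (it is generated by $p$, $\lnot$, $\lor$, $\ltl{Y}$, $\ltl{S}$, which together with the derived past operators $\ltl{Z}$, $\ltl{O}$, $\ltl{H}$, $\ltl{T}$ and $\land$ defined as shortcuts, gives all of $\LTLFP$); any $\eta$ is a $\psi$ (bounded future layer), hence a $\phi$ (future layer); and the future layer explicitly contains the production $\ltl{G\phi}$. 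Therefore $\ltl{G}\alpha \in \LTLEBRP$ for every $\alpha \in \LTLFP$, so $\sem{\ltl{G\alpha}} \subseteq \sem{\LTLEBRP}$, which completes this direction.

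Putting the two inclusions together yields $\sem{\LTLEBRP} = \sem{\LTL} \cap \SAFETY$. The corollary that $\sem{\LTLEBRP} = \sem{\safetyltl}$ then follows because Chang \etal~\cite{ChangMP92} established $\sem{\safetyltl} = \sem{\LTL} \cap \SAFETY$ (the same normal-form argument shows $\ltl{G}\alpha$ is expressible in $\safetyltl$ and that $\safetyltl$ is a safety fragment of $\LTL$), so both logics capture precisely the $\LTL$-definable safety languages.

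I do not expect a serious obstacle here: the heavy lifting is done by the two cited black boxes, namely Sistla's safety-fragment theorem for the easy direction and the $\ltl{G\alpha}$ normal-form theorem for the hard direction. The only point requiring a little care is the verification that the pure past layer $\eta$ of \cref{def:ltlebr} really coincides with all of $\LTLFP$ and not merely a sublanguage of it — one must check that the other past operators $\ltl{H}$, $\ltl{O}$, $\ltl{T}$, $\ltl{Z}$ and the connective $\land$, which are listed among the $\LTLP$ shortcuts in~\cite{cimatti2020reactive}, are indeed definable from $\ltl{Y}$, $\ltl{S}$, $\lnot$, $\lor$ so that $\eta$ is closed under them up to equivalence. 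This is routine (\eg $\ltl{H}\phi \equiv \lnot(\top\,\ltl{S}\,\lnot\phi)$, $\ltl{Z}\phi \equiv \lnot\ltl{Y}\lnot\phi$, $\ltl{O}\phi \equiv \top\,\ltl{S}\,\phi$), but it is the one bookkeeping step worth spelling out explicitly in the proof.
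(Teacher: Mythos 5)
Your proof is correct and follows essentially the same route as the paper's: the left-to-right inclusion via $\sem{\LTL}=\sem{\LTLP}$ together with Sistla's Theorem~3.1 on universal-operator fragments, and the right-to-left inclusion via the $\ltl{G\alpha}$ normal form of \cref{th:normalformgalpha} plus the observation that $\ltl{G\alpha}$ is syntactically in \LTLEBRP. Your extra bookkeeping step (checking that the pure past layer really captures all of \LTLFP\ up to the standard shortcuts) is a reasonable addition that the paper leaves implicit.
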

\begin{proof}
  We first prove that $\sem{\LTLEBRP} \subseteq \sem{\LTL} \cap \SAFETY$.  Let
  $\phi \in \sem{\LTLEBRP}$. By \cref{def:ltlebr}, $\phi \in \LTLP$, and thus,
  since $\sem{\LTL} = \sem{\LTLP}$, it holds that $\lang(\phi) \in \sem{\LTL}$.
  Moreover, since \LTLEBRP contains only universal temporal operators, by
  Theorem 3.1 in \cite{sistla1994safety}, it is a \emph{safety} fragment of
  \LTL, and we have that $\lang(\phi) \in \SAFETY$. Therefore, $\lang(\phi) \in
  \sem{\LTL} \cap \SAFETY$.
  
  We now prove that $\sem{\LTL} \cap \SAFETY \subseteq \sem{\LTLEBRP}$.  Let
  $\phi$ be a formula such that $\lang(\phi) \in \sem{\LTL} \cap \SAFETY$.  By
  \cref{th:normalformgalpha}, $\lang(\phi) \in \sem{\ltl{G\alpha}}$. Now,
  $\ltl{G\alpha}$ (for any $\alpha \in \LTLFP$) is a formula that syntactically
  belongs to $\LTLEBRP$, that is $\ltl{G\alpha} \in \LTLEBRP$, and thus
  $\sem{\ltl{G\alpha}} \subseteq \sem{\LTLEBRP}$. It follows that $\lang(\phi)
  \in \sem{\LTLEBRP}$.
\end{proof}

\subsection{Comparison between \LTLEBRP, $\mathsf{G}\alpha$ and \safetyltl}
\label{sub:logicscomparison}

\paragraph{Comparison with $\mathsf{G}\alpha$}

Previously, we proved that the set of languages definable in \LTLEBRP is
exactly the set of safety languages definable in \LTLP.
In turn, \cref{th:normalformgalpha} shows that these sets correspond to
languages definable by a formula of type $\ltl{G\alpha}$, where $\alpha \in
\LTLFP$.
Despite being equivalent fragments, we think that \LTLEBRP offers
a more natural language for safety properties than the $\ltl{G\alpha}$
fragment.
Consider for example the following property, expressed in natural language:
either $p_3$ holds forever, or there exists two time points $t^\prime \le t$
such that
\begin{enumerate*}[label=(\roman*)]
  \item
    $p_1$ holds in $t$,
  \item
    $p_2$ holds in $t^\prime$, and
  \item
    $p_2$ holds from time point $0$ to $t$.
\end{enumerate*}
The property can be easily formalized in \LTLEBRP by the formula $\ltl{p_1
R (p_2 R p_3)}$. The equivalent formula in the $\ltl{G}\alpha$ fragment is
$\ltl{G(H(p_3) \lor O(p_2 \land O(p_1) \land H(p_3)))}$, which is arguably more
intricate.

\paragraph{Comparison with \safetyltl}

\safetyltl is the fragment of \LTL (thus with only future temporal modalities)
containing all and only the \LTL-formulas that, when in negated normal form, do
\emph{not} contain any \emph{until} or \emph{eventually} operator.
In \cite{sistla1994safety}, Sistla proved that this fragment expresses only
safety properties, that is $\sem{\safetyltl} \subseteq \sem{\LTL} \cap
\SAFETY$.  The converse direction, that is $\sem{\LTL} \cap \SAFETY \subseteq
\sem{\safetyltl}$, is reported in \cite{ChangMP92}.
It immediately follows that \LTLEBRP and \safetyltl are expressively
equivalent, namely $\sem{\LTLEBRP} = \sem{\safetyltl}$.

Differently from \LTLEBRP, \safetyltl does \emph{not} impose any syntactic
restriction on the nesting of the logical operators; as a matter of fact,
$\ltl{G(p_1 \lor G p_2)}$ belongs to the syntax of \safetyltl but not to the
syntax of \LTLEBRP, even though $\ltl{G(p_1 \lor G p_2)} \equiv \ltl{G(\lnot
p_2 \to H p_1)} \in \LTLEBRP$.
The restrictions on the syntax of \LTLEBRP are due to algorithmic aspects:
each layer of the syntax of \LTLEBRP (recall \cref{def:ltlebr}) corresponds to
a step of the algorithm for the symbolic automata construction starting from
\LTLEBRP-formulas. As a matter of fact, in practice, \LTLEBRP has shown to
avoid an exponential blowup in time with respect to known algorithms for
automata contruction for safety specifications \cite{cimatti2020reactive}.
Last but not least, the realizability problem of \LTLEBRP is
\EXPTIME-complete~\cite{cimatti2020reactive}, as opposed to
the realizability of \LTLP, which is
\EXPTIME[2]-complete \cite{pnueli1989synthesis,rosner1992modular}.
Consider now \LTLEBR, that is the fragment of \LTLEBRP devoid of past
operators. 
Since each formula of \LTLEBR syntactically belongs to \safetyltl, it
immediately follows that $\sem{\LTLEBR} \subseteq \sem{\safetyltl}$.
In the next section, we will prove that the converse direction does \emph{not}
hold, that is \LTLEBR is \emph{strictly} less expressive than \LTLEBRP, and
thus less expressive than \safetyltl as well.


\section{\LTLEBR is strictly less expressive than full \LTLEBRP}
\label{sec:futureltlebr}

In the previous sections, we have seen that: 
\begin{equation*}
\sem{\LTLEBRP}
  = \sem{\ltl{G\alpha}} 
  = \sem{\LTL} \cap \SAFETY
  = \sem{\safetyltl}
\end{equation*}
In particular, thanks to the use of the \emph{pure past layer} (recall
\cref{def:ltlebr}), \LTLEBRP can easily capture the whole class of
$\sem{\ltl{G\alpha}}$, and thus the whole class of $\sem{\LTL} \cap \SAFETY$.
However, one may wonder whether the pure past layer is really necessary, or
whether the class $\sem{\ltl{G\alpha}}$ can be expressed in \LTLEBRP without
the use of past operators.

\LTLEBR is defined as the fragment of \LTLEBRP devoid of the pure past layer
(recall \cref{sub:ltlebr}). In this section, we investigate the problem of
establishing whether \LTLEBR has the same expressive power of \LTLEBRP, or
equivalently, whether \LTLEBR can express every language in
$\sem{\safetyltl}$.
We will prove that this is \emph{not} the case, that is
\begin{equation}
\label{eq:futureltlebrvsltlebr}
  \sem{\LTLEBR} \subsetneq \sem{\LTLEBRP}
\end{equation}

This result proves that \emph{past modalities}, although being not important
for the expressiveness of full \LTL (since $\sem{\LTL} = \sem{\LTLP}$
\cite{gabbay1980temporal,lichtenstein1985glory,markey2003temporal}), can play
a crucial role for the expressive power of \emph{fragments} of \LTL, like, for
instance, \LTLEBR.

\subsection{The general idea}

We will prove \cref{eq:futureltlebrvsltlebr} by showing that $\sem{\LTLEBR}
\subsetneq \sem{\safetyltl}$. The result in \cref{eq:futureltlebrvsltlebr}
follows from the fact that $\sem{\safetyltl} = \sem{\LTLEBRP}$.
We will prove that the language of the \safetyltl-formula $\phiG \coloneqq
\ltl{G(p_1 \lor G(p_2))}$ cannot be expressed by any \LTLEBR-formula.
The formula $\phiG$ belongs \emph{syntactically} to \safetyltl, and thus
$\lang(\phiG) \in \sem{\safetyltl}$. We also note that $\phiG$ can be expressed
in \LTLEBRP. In fact, it holds that:
\begin{align}
\label{eq:phiGequiv}
  \ltl{G(p_1 \lor G(p_2))} \equiv \ltl{G(\lnot p_2 \to H(p_1))}
\end{align}
Since $\ltl{G(\lnot p_2 \to H(p_1))} \in \LTLEBRP$, it holds that $\lang(\phiG)
\in \sem{\LTLEBRP}$. It is worth noticing the following points:
\begin{enumerate*}[label=(\roman*)]
  \item
    $\ltl{G(\lnot p_2 \to H(p_1))}$ is of the form $\ltl{G\alpha}$, where
    $\alpha \in \LTLFP$ ($\alpha$ is a pure past formula);
  \item
    the formula $\phiG$ is equivalent to $\ltl{G(p_2) || ((X G p_2) R p_1)}$,
    but the latter formula does not syntactically belong to \LTLEBR, due to the
    restriction that forces the leftmost argument of any \emph{release}
    operator to contain no universal temporal operators (\ie $\ltl{R}$ and
    $\ltl{G}$).
\end{enumerate*}
In fact, in the following, we will prove that $\lang(\phiG) \not \in
\sem{\LTLEBR}$.

The proof of the undefinability of $\phiG$ is based on the fact that each
formula of \LTLEBR cannot constrain an arbitrarily long prefix of a state
sequence, but only a finite prefix whose maximum length depends on the maximum
number of nested \emph{next} operators.

Consider again the formula $\phiG \coloneqq \ltl{G(p_1 \lor G(p_2))}$. The
language $\lang(\phiG)$ is expressed by the $\omega$-regular expression
$(\{p_1\})^\omega + (\{p_1\})^*\cdot(\{p_2\})^\omega$. Written in natural
language, each model of $\phiG$ cannot contain a position in which $\lnot p_2$
holds preceded by a position in which $\lnot p_1$ holds.
\begin{remark}
\label{rem:phiGnaturallang}
  Let $\sigma \subseteq (2^\Sigma)^\omega$ be a state sequence. It holds that:
  \begin{align*}
    \sigma \models \phiG \ \Rightarrow \lnot \exists i,j (j \le i \land \sigma_j \models
    \lnot p_1 \land \sigma_i \models \lnot p_2)
  \end{align*}
\end{remark}

We define $^{i,k}\sigma^j$ as \emph{the} state sequence such that at the time
points $i$ and $k$ it holds $p_1 \land \lnot p_2$, at time point $j$ it holds
$\lnot p_1 \land p_2$, and for all the other time points $p_1 \land p_2$ holds. 
The membership of $^{i,k}\sigma^j$ to $\lang(\phiG)$ depends on the value of
the three indices $i$, $j$ and $k$, as follows.  

\begin{remark}
\label{rem:valuesijkmembership}
  If $i<j$ and $k<j$, then $^{i,k}\sigma^j \models \phiG$.  
  Conversely, if $i \ge j$ or $k \ge j$, then $^{i,k}\sigma^j \not\models
  \phiG$.
\end{remark}

As we will see, given a generic formula $\psi \in \LTLEBR$, one can always
find some values for the indices $i$, $j$ and $k$ such that 
\begin{enumerate*}[label=(\alph*)]
  \item
    $j$ is chosen sufficiently greater than $i$;
  \item
    $k$ is chosen sufficiently greater than $j$;
  \item
    $\psi$ is not able to distinguish the state sequence $^{i,i}\sigma^j$ from
    $^{i,k}\sigma^j$.
\end{enumerate*}
Since, by \cref{rem:valuesijkmembership}, $^{i,i}\sigma^j \in \lang(\phiG)$ but
$^{i,k}\sigma^j \not \in \lang(\phiG)$, this proves the undefinability of
$\phiG$ in \LTLEBR.
The rationale is that the \LTLEBR logic combines bounded future formulas (\ie
formulas obtained by a Boolean combination of propositional atoms and $\ltl{X}$
operators) and universal temporal operators (\ie $\ltl{G}$ and $\ltl{R}$).
This implies the fact that, for a generic model $\sigma$ of an \LTLEBR-formula
$\psi$, \emph{at each time point} $i \ge 0$ of $\sigma$ (this corresponds to
the universal temporal operators) only a \emph{finite and bounded suffix} after
$i$ (this corresponds to the \LTLB-formulas) can be constrained by $\psi$ (this
can be thought of as a sort of bounded memory property of this logic).
Equivalently, this means that each \LTLEBR-formula is \emph{not} able to
constrain any finite but arbitrarly long (unbounded) prefix of a state
sequence, contrary, for instance, to the case of the formula $\ltl{G(\lnot p_2
\to H(p_1))}$ (that is equivalent to $\phiG$, see \cref{eq:phiGequiv}).

\subsection{The Canonical Form}
\label{sub:futureltlebrboundmem}

The limitation of \LTLEBR-formulas mentioned before is more evident in the
\emph{canonical form} for the \LTLEBR logic, that we will define in this part.
We first give some preliminaries definitions.
We define \emph{Bounded Past \LTLFP} (\LTLBP, for short) as the set of all and
only the \LTLEBRP formulas that are a Boolean combination of propositional
atoms and \emph{yesterday} operators ($\ltl{Y}$).
We use the shortcut $\ltl{\psi_1 S^{[a,b]} \psi_2}$ for denoting the formula
$\bigvee_{i=a}^{b} ( \ltl{Y_{1}} \dots \ltl{Y_{i}}(\psi_2) \land
\bigwedge_{j=0}^{i-1} \ltl{Y_{1}} \dots \ltl{Y_{j}} (\psi_1))$.
Given a formula $\alpha \in \LTLBP$, we define its \emph{temporal depth},
denoted as $D(\alpha)$, as follows:
\begin{multicols}{2}
\begin{itemize}
  \item $D(p) = 0$, for all $p \in \Sigma$
  \item $D(\lnot\alpha_1) = D(\alpha_1)$
  \item $D(\alpha_1 \land \alpha_2) = \max\{D(\alpha_1),D(\alpha_2)\}$
  \item $D(\ltl{Y\alpha_1}) = 1 + D(\alpha_1)$
  \item $D(\ltl{\alpha_1 S^{[a,b]} \alpha_2}) = b +\max\{D(\alpha_1),D(\alpha_2)\}$
  \item[]
\end{itemize}
\end{multicols}
For each $\alpha \in \LTLBP$, the language $\langfin(\alpha)$ consists only of
words of length at most $D(\alpha)+1$. 
Recall from \cref{sec:prelim} that, given a infinite state sequence $\sigma
= \seq{\sigma_0,\sigma_1,\dots}$ and some $n \ge 0$, $\sigma_{[n-d,n]}$ is the
interval of $\sigma$ of length \emph{at most} $d$ ending at index $n$.
The crucial property of \LTLBP-formulas, that can be shown with a simple
induction, is that their truth over a state sequence $\sigma$ can be checked by
considering only a finite and \emph{bounded} interval of $\sigma$, whose length
depends on the \emph{temporal depth} of the formula.
\begin{remark}
  For any $\alpha \in \LTLBP$, with temporal depth $d=D(\alpha)$, and for any
  $n \ge 0$, it holds that $\sigma,n \models \alpha$ \emph{if and only if}
  $\sigma_{[n-d,n]} \models \alpha$.
\end{remark}

We give now the \emph{canonical form} for \LTLEBR, and we refer to it as
\canLTLEBR.  The canonical form of \LTLEBR forces any universal unbounded
operator, like \emph{globally} or \emph{release}, to contain only
\LTLBP-formulas.  Formally, we define \canLTLEBR as the canonical form
described in \cref{def:ltlebrcanonform} but such that each
$\alpha_i,\beta_i$ is a \emph{bounded past \LTL} formula.
%
%
By applying the same transformation from \LTLEBRP to its canonical form given in
\cite{cimatti2020reactive}, one obtain the following lemma.
\begin{lemma}
\label{lem:futureltlebrcanonform}
  $\sem{\LTLEBR} = \sem{\canLTLEBR}$.
\end{lemma}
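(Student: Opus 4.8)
The plan is to establish $\sem{\LTLEBR} = \sem{\canLTLEBR}$ by two inclusions, the right-to-left one being immediate (syntactically every $\canLTLEBR$ formula is an $\LTLEBR$ formula, since $\LTLBP \subseteq \LTLFP$ restricted to the future-free bounded fragment, and all the connectives used in the canonical form are available in $\LTLEBR$). The substance is in the left-to-right inclusion: given an arbitrary $\psi \in \LTLEBR$, produce an equivalent formula in $\canLTLEBR$. First I would recall the transformation of \cite{cimatti2020reactive} that rewrites any $\LTLEBRP$ formula into the canonical form of \cref{def:ltlebrcanonform}, and observe that it proceeds by pushing $\ltl{X}$ operators outward, distributing $\land$ and $\lor$ through the Boolean layer, and pulling the inner $\psi$-arguments (which are Boolean combinations of atoms and $\ltl{X}$'s over pure-past subformulas) into the positions of the $\alpha_i, \beta_i$ using the fact that every such subformula is itself expressible by a pure-past formula. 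I would then argue that when the input is in $\LTLEBR$ rather than $\LTLEBRP$ — i.e.\ it contains no pure-past layer — every $\eta$-subformula appearing in it is in fact a Boolean combination of atoms and $\ltl{Y}$-operators only (there is no $\ltl{S}$, $\ltl{O}$, $\ltl{H}$, $\ltl{Z}$ available, since those live in the pure past layer that $\LTLEBR$ discards), so the $\alpha_i, \beta_i$ produced by the transformation are $\LTLBP$-formulas, not merely $\LTLFP$-formulas.

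The key steps, in order, are: (1)~state that $\LTLEBR$ is exactly $\LTLEBRP$ with the production $\eta \bydef p \mid \ltl{\lnot\eta} \mid \ltl{\eta_1 \lor \eta_2} \mid \ltl{Y\eta} \mid \ltl{\eta_1 S \eta_2}$ replaced by the base case of the bounded future layer having no pure-past atoms, equivalently with $\psi$ built from $p$ directly; (2)~trace the canonical-form transformation of \cite{cimatti2020reactive} and check it is \emph{layer-preserving} in the sense that it never introduces a pure-past temporal operator that was not already present — it only rearranges $\ltl{X}$, $\land$, $\lor$, $\ltl{G}$, $\ltl{R}$ and translates the bounded-future $\psi$-subformulas into pure-past form; (3)~conclude that for $\LTLEBR$ inputs the translated $\alpha_i, \beta_i$ use only $\ltl{Y}$ (no $\ltl{S}$), hence are in $\LTLBP$; (4)~wrap up the two inclusions and invoke equivalence-preservation of the transformation from \cite{cimatti2020reactive} to get $\sem{\LTLEBR} = \sem{\canLTLEBR}$.

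I expect the main obstacle to be step~(2)/(3): making precise the claim that the \cite{cimatti2020reactive} canonicalization, applied to a past-free input, yields past-free (indeed $\ltl{Y}$-only) pure-past components. The subtlety is that when a bounded-future subformula such as $\ltl{X}^m \theta$ gets "moved under" a surrounding $\ltl{G}$ or $\ltl{R}$, the relative shift must be re-expressed looking \emph{backwards} — an $\ltl{X}^m$ seen from an outer time point becomes, when internalized at the point where $\ltl{G}$ or $\ltl{R}$ quantifies, a $\ltl{Y}^m$ applied to the (now pure-past) body. One must verify that this shift bookkeeping produces a finite nesting of $\ltl{Y}$'s whose depth is bounded by the original nesting of $\ltl{X}$'s, and in particular never forces an unbounded $\ltl{S}$; the $\ltl{S}^{[a,b]}$ shorthand and the temporal-depth function $D$ are exactly the tools for stating this bound. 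Once that correspondence $\ltl{X}^m \leftrightarrow \ltl{Y}^m$ under universal operators is pinned down, the rest is the routine induction already done in \cite{cimatti2020reactive}, merely observing at each step that the $\LTLBP$ membership invariant is maintained.

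A complete write-up would therefore: restate the canonical-form construction of \cite{cimatti2020reactive} as a black box producing, for any $\chi \in \LTLEBRP$, an equivalent formula of the shape in \cref{def:ltlebrcanonform} with $\alpha_i,\beta_i \in \LTLFP$; then add the refinement that if $\chi$ has no pure-past layer then each $\alpha_i,\beta_i$ can be taken in $\LTLBP$, by induction on the structure of the transformation with the invariant "all pure-past fragments generated so far are Boolean combinations of atoms and $\ltl{Y}$'s of bounded depth"; and finally note the converse inclusion is trivial and conclude $\sem{\LTLEBR} = \sem{\canLTLEBR}$.
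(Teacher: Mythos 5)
Your left-to-right direction is essentially the paper's proof: run the canonicalization of \cite{cimatti2020reactive} on the given $\LTLEBR$ formula and observe that, since the input contains no past operators, the only past operators the translation can introduce are those produced by the pastification step, which are bounded ($\ltl{Y}$ and $\ltl{S^{[a,b]}}$), so the resulting $\alpha_i,\beta_i$ are in $\LTLBP$. Your discussion of the $\ltl{X}^m\!\leftrightarrow\!\ltl{Y}^m$ shift under $\ltl{G}$/$\ltl{R}$ is exactly the bookkeeping that pastification performs, and the paper likewise treats it as a black box.

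The genuine problem is your ``immediate'' right-to-left inclusion. You claim that every $\canLTLEBR$ formula is \emph{syntactically} an $\LTLEBR$ formula, but this is false and contradicts your own step~(1): the components $\alpha_i,\beta_i$ of a canonical-form formula are $\LTLBP$ formulas, i.e.\ Boolean combinations of atoms and $\ltl{Y}$ (and $\ltl{S^{[a,b]}}$) operators, whereas $\LTLEBR$ is by definition the fragment of $\LTLEBRP$ \emph{devoid of all past operators}, so it admits no $\ltl{Y}$ at all. For instance $\ltl{G(p_1 || Y p_2)}$ belongs to $\canLTLEBR$ but not to $\LTLEBR$. Hence $\sem{\canLTLEBR}\subseteq\sem{\LTLEBR}$ is a semantic, not a syntactic, inclusion: you must argue that each canonical-form formula is \emph{equivalent} to a past-free $\LTLEBR$ formula, e.g.\ by expanding the $\ltl{S^{[a,b]}}$ shorthands into $\ltl{Y}$'s and then eliminating the bounded-past components in favour of $\ltl{X}$'s (the de-pastification direction of the very shift argument you sketch for the other inclusion, as in $\ltl{G(p_1 || Y p_2)} \equiv \ltl{p_1 && G(p_2 || X p_1)}$). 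This is how the paper closes that direction, and without some such argument your proof of this inclusion does not go through.
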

\begin{proof}
  Obviously $\sem{\canLTLEBR} \subseteq \sem{\LTLEBR}$, since each formula
  $\psi$ that belongs to \\ $\canLTLEBR$ can be turned
  into an equivalent one $\psi^\prime \in \LTLEBR$ by expanding
  each bounded past operators into conjunctions/disjunctions of
  \emph{yesterday} operators.

  For proving $\sem{\LTLEBR} \subseteq \sem{\canLTLEBR}$, it is sufficient to
  apply the transformations described in \cite{cimatti2020reactive} for the
  translation of \LTLEBRP into canonical form.
  In particular, since by definition $\psi$ has no past temporal operators, the
  only past operators in $\psi^\prime$ are the ones introduced by the
  \emph{pastification} step described in \cite{cimatti2020reactive}, which are all
  \emph{bounded}, that is either $\ltl{Y}$ or $\ltl{S^{[a,b]}}$.
\end{proof}
The canonical form of \LTLEBR makes it easier to prove
\cref{eq:futureltlebrvsltlebr}. Take for example the formula $\ltl{X X G (p
\lor Y p \lor Y Y p)}$, that belongs to \canLTLEBR. It is clear that, at each
time point, this formula can constrain only the interval consisting of the
current state and its two previous states (in fact its temporal depth is $3$).

\subsection{The main proof}
\label{sub:futureltlebrmainproof}

In this part, we show the undefinability of the formula $\phiG$ in the
\canLTLEBR logic.  The undefinability in \LTLEBR follows from
\cref{lem:futureltlebrcanonform}.

Given three indices $i,j,k \in \N$ such that $i \not = j$ and $k \not = j$, we
formally define the state sequence $^{i,k}\sigma^j
= \seq{{}^{i,k}\sigma^j_0 , \ {}^{i,k}\sigma^j_1,\dots}$ as follows:
\begin{align*}
  ^{i,k}\sigma^j_h =
    \begin{cases}
      \{p_1\}    & \text{if } h \in \{i,k\} \\
      \{p_2\}    & \text{if } h = j \\
      \{p_1,p_2\} & \text{otherwise}
    \end{cases}
\end{align*}

The core of the main theorem is based on the fact that any formula of type
$\ltl{G\alpha}$ or $\ltl{\alpha R \beta}$, where $\alpha$ and $\beta$ are
\emph{bounded past \LTLFP} formulas, is not able to distinguish the state
sequence $^{i,i}\sigma^j$ with $i<j$ (which is a model of $\phiG$) from
$^{i,k}\sigma^j$ with $k>j$ (which is \emph{not} a model of $\phiG$), for
sufficiently large values of $i$, $j$ and $k$. The choice for the values of the
three indices is based on the values of the \emph{temporal depth} of $\alpha$
and $\beta$.
Since the \emph{globally} operator is a special case of the \emph{release}
operator, that is $\ltl{G\alpha \equiv \false R \alpha}$, it suffices to prove
the property for formulas of type $\ltl{\alpha R \beta}$.
%
%
We first prove the two fundamental properties that show that, for any interval
of $^{i,i}\sigma^j$ of length at most $d$ (for any $d \in \N$), we can find the
exact same interval in $^{i,k}\sigma^j$, and \viceversa.
\cref{fig:modelsprooffutureltlebr} shows the idea of this correspondence.
%

%
\begin{figure}[t]
  \setlength{\intextsep}{0mm}
\centering
\begin{adjustbox}{minipage=\linewidth,scale=1.0}
\begin{tikzpicture}

    \draw[thick] (1.0,3.0) -- (11.0,3.0);
    \draw[thick] (1.0,2.9) -- (1.0,3.1);
    \draw[thick,dashed] (11.3,3.0) -- (11.8,3.0);
    \node[left] at (0.7,3.0) {$^{i,i}\sigma^j$};
    \node[below] at (1.10,2.9) {$0$};
    \draw[thick] (3.5,2.9) -- (3.5,3.0);
    \draw[thick] (3.8,2.9) -- (3.8,3.0);
    \node[below] at (3.65,2.9) {$i$};
    \node[below] at (3.65,2.5) {\small $\{p_1\}$};
    \draw[thick,dashdotted] (3.0,3.1) -- (4.05,3.1);
    \draw[thick,dotted] (1.4,3.1) -- (2.45,3.1);
    \draw[thick,dotted] (8.5,3.1) -- (9.55,3.1);
    \draw[thick] (6.0,2.9) -- (6.0,3.0);
    \draw[thick] (6.3,2.9) -- (6.3,3.0);
    \node[below] at (6.15,2.9) {$j$};
    \node[below] at (6.15,2.5) {\small $\{p_2\}$};
    \draw[thick,loosely dashdotdotted] (5.5,3.1) -- (6.55,3.1);

    \draw[thick] (1.0,1.0) -- (11.0,1.0);
    \draw[thick] (1.0,0.9) -- (1.0,1.1);
    \draw[thick,dashed] (11.3,1.0) -- (11.8,1.0);
    \node[left] at (0.7,1.0) {$^{i,k}\sigma^j$};
    \node[below] at (1.10,0.9) {$0$};
    \draw[thick] (3.5,0.9) -- (3.5,1.0);
    \draw[thick] (3.8,0.9) -- (3.8,1.0);
    \node[below] at (3.65,0.9) {$i$};
    \node[below] at (3.65,0.5) {\small $\{p_1\}$};
    \draw[thick,dashdotted] (3.0,1.1) -- (4.05,1.1);
    \draw[thick,dotted] (1.4,1.1) -- (2.45,1.1);
    \draw[thick,dotted] (9.9,1.1) -- (10.95,1.1);
    \draw[thick] (6.0,0.9) -- (6.0,1.0);
    \draw[thick] (6.3,0.9) -- (6.3,1.0);
    \node[below] at (6.15,0.9) {$j$};
    \node[below] at (6.15,0.5) {\small $\{p_2\}$};
    \draw[thick,loosely dashdotdotted] (5.5,1.1) -- (6.55,1.1);
    \draw[thick] (9.0,0.9) -- (9.0,1.0);
    \draw[thick] (9.3,0.9) -- (9.3,1.0);
    \node[below] at (9.15,0.9) {$k$};
    \node[below] at (9.15,0.5) {\small $\{p_1\}$};
    \draw[thick,dashdotted] (8.5,1.1) -- (9.55,1.1);

    \node at (13.5,3.0) {Legend};
    \node at (13.0,2.5) {\small Type 1 };
    \draw[thick,dotted] (13.6,2.5) -- (14.65,2.5);
    \node at (13.0,2.0) {\small Type 2 };
    \draw[thick,dashdotted] (13.6,2.0) -- (14.65,2.0);
    \node at (13.0,1.5) {\small Type 3 };
    \draw[thick,loosely dashdotdotted] (13.6,1.5) -- (14.65,1.5);
    \draw (12.2,1.2) rectangle (14.8,3.3);

\end{tikzpicture}
  \caption{}
  \label{fig:modelsprooffutureltlebr}
\end{adjustbox}
\end{figure}

\begin{lemma}
\label{lem:futureltlebrmapping}
  Let $d \in \N$.
  For all $i \ge d$, for all $j \ge i+d$, and for all $k \ge j+d$, it holds
  that:
  \begin{align*}
    &\mbox{Property 1: }
      \forall n^\prime \ge 0 \suchdot \exists n \ge 0 \suchdot
      {}^{i,k}\sigma^{j}_{[n^\prime-d,n^\prime]}
      = {}^{i,i}\sigma^{j}_{[n-d,n]} \\
    &\mbox{Property 2: }
      \forall n \ge 0 \suchdot \exists n^\prime \ge 0 \suchdot
      {}^{i,i}\sigma^{j}_{[n-d,n]}
      = {}^{i,k}\sigma^{j}_{[n^\prime-d,n^\prime]}
  \end{align*}
\end{lemma}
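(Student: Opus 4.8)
The plan is to analyze the structure of both state sequences $^{i,i}\sigma^{j}$ and $^{i,k}\sigma^{j}$ and observe that, away from the "special" positions, both sequences agree with the constant sequence in which every state is $\{p_1,p_2\}$. The special positions are: in $^{i,i}\sigma^{j}$, the position $i$ (with state $\{p_1\}$) and the position $j$ (with state $\{p_2\}$); in $^{i,k}\sigma^{j}$, the positions $i$ and $k$ (with state $\{p_1\}$) and the position $j$ (with state $\{p_2\}$). The key numerical point is that the hypotheses $i \ge d$, $j \ge i+d$, $k \ge j+d$ guarantee that these special positions are pairwise separated by more than $d$, and that $i$ itself is at distance at least $d$ from the left endpoint $0$. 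Hence every window of length $d+1$ (i.e. every interval $[\cdot-d,\cdot]$) in either sequence contains \emph{at most one} special position, and a window containing a special position has that position sitting in a "generic" neighbourhood of constant $\{p_1,p_2\}$ states on either side (up to the left boundary at $0$, which only occurs for windows entirely to the left of $i$, where both sequences are constant anyway).

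First I would set up a case distinction on the location of the window, proving Property 2 (from $^{i,i}\sigma^{j}$ into $^{i,k}\sigma^{j}$); Property 1 is symmetric in spirit but uses that $^{i,k}\sigma^{j}$ has the extra special position $k$. For Property 2, given $n \ge 0$, consider the window $^{i,i}\sigma^{j}_{[n-d,n]}$. If the window contains neither $i$ nor $j$, it is a block of $\{p_1,p_2\}$'s (possibly truncated at $0$), and I pick $n' = n$ if $n < i$, or any large $n'$ far from all special positions of $^{i,k}\sigma^{j}$ otherwise — in both cases the target window is the same block. If the window contains $i$ but not $j$, then by the separation bounds it contains no other special position, so the window looks like a block of $\{p_1,p_2\}$'s with a single $\{p_1\}$ at the position corresponding to $i$; I choose $n' = n$, and since $i$ is also a special position of $^{i,k}\sigma^{j}$ with the same state $\{p_1\}$ and enough generic room around it (here I use $i \ge d$ and $j \ge i + d$), the target window matches. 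If the window contains $j$ but not $i$, it has a single $\{p_2\}$ surrounded by $\{p_1,p_2\}$'s; again choose $n' = n$ and use that $j$ is a special position of $^{i,k}\sigma^{j}$ with the same state, with generic surroundings guaranteed by $j \ge i+d$ and $k \ge j+d$.

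For Property 1, the extra subtlety is the position $k$ in $^{i,k}\sigma^{j}$: given $n'$, if the window $^{i,k}\sigma^{j}_{[n'-d,n']}$ contains $k$ (and hence, by separation, neither $i$ nor $j$), it looks exactly like a window of $^{i,i}\sigma^{j}$ around its special position $i$ — a single $\{p_1\}$ surrounded by $\{p_1,p_2\}$'s — so I map it to a window of $^{i,i}\sigma^{j}$ centred analogously on $i$. The remaining cases (window around $i$, around $j$, or generic) are handled exactly as in Property 2 with the roles of the two sequences swapped. The main obstacle, and the only place requiring care, is the bookkeeping at the left boundary $0$ and the precise verification that the chosen offsets keep the relevant special position at the \emph{same relative position} within the window; this is where the three separation hypotheses are used in full, and I would make the argument clean by noting that for $n \ge i + d$ (resp. $n' \ge i+d$) one may simply translate the window by the constant offset $k - i$, which sends the block structure of one sequence exactly onto that of the other, while for $n < i + d$ (resp. $n' < i+d$) the identity map $n' = n$ works because both sequences agree on the prefix $[0, j+d)$ except possibly at $i$ and $j$, where they agree as well.
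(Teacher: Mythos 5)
Your proposal is correct in substance and follows essentially the same route as the paper's proof: classify each window of ${}^{i,i}\sigma^{j}$ and ${}^{i,k}\sigma^{j}$ by which of the special positions $i$, $j$, $k$ it contains (the paper's Types 1--3) and exhibit an equal window in the other sequence, mapping windows around $k$ to windows around $i$. One small caveat: the hypotheses only guarantee separation \emph{at least} $d$, not more than $d$, so in the boundary cases $j=i+d$ or $k=j+d$ a window of length $d+1$ can contain two special positions; this looseness is already present in the paper's own proof (which tacitly works with intervals of length $d$), so it is not a gap specific to your argument.
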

\begin{proof}
  Take any value for $i$, $j$, and $k$ such that:
  \begin{enumerate*}[label=(\roman*)]
    \item $i \ge d$, 
    \item $j \ge i+d$,
    \item $k \ge j+d$.
  \end{enumerate*}
  Given any interval of length $d$ of the state sequence ${}^{i,i}\sigma^{j}$,
  we show how to find an exact same one in ${}^{i,k}\sigma^{j}$, and viceversa.
  
  The constraints above on the three indices ensure that both the state
  sequences $^{i,i}\sigma^j$ and $^{i,k}\sigma^j$ contain \emph{only three}
  types of intervals of length at most $d$.  Consider $^{i,k}\sigma^j$ (the
  case for $^{i,i}\sigma^j$ is specular).  The three types are the following:
  \begin{itemize}[leftmargin=+.7in]
    \item[Type 1:]
      $(\{p_1,p_2\})^n$ for some $0 \le n \le d$;
    \item[Type 2:]
      $(\{p_1,p_2\})^n \cdot (\{p_1\}) \cdot (\{p_1,p_2\})^{d-n-1}$, for some $0 \le
      n < d$;
    \item[Type 3:]
      $(\{p_1,p_2\})^n \cdot (\{p_2\}) \cdot (\{p_1,p_2\})^{d-n-1}$, for some $0 \le
      n < d$;
  \end{itemize}
  The situation is depicted in \cref{fig:modelsprooffutureltlebr}.
  Given any interval of any of the three types above, we show below how to find
  the very same interval in $^{i,i}\sigma^j$
  (\cref{fig:modelsprooffutureltlebr} tries to show visually this
  correspondence):
  \begin{itemize}
    \item
      each interval of $^{i,k}\sigma^j$ of type $(\{p_1,p_2\})^n$ is equal to
      $^{i,i}\sigma^j_{[0,n]}$;
    \item
      each interval of $^{i,k}\sigma^j$ of type $(\{p_1,p_2\})^n \cdot
      (\{p_1\}) \cdot (\{p_1,p_2\})^{d-n-1}$ is equal to
      $^{i,i}\sigma^j_{[i-n,i+d-n-1]}$.
    \item
      each interval of $^{i,k}\sigma^j$ of type $(\{p_1,p_2\})^n \cdot
      (\{p_2\}) \cdot (\{p_1,p_2\})^{d-n-1}$ is equal to
      $^{i,i}\sigma^j_{[j-n,j+d-n-1]}$;
  \end{itemize}
  This proves \emph{Property 1}.
  
  Similarly, the correspondence between intervals of $^{i,i}\sigma^j$ and
  intervals of $^{i,k}\sigma^j$ is the following: 
  \begin{itemize}
    \item
      each interval of $^{i,i}\sigma^j$ of type $(\{p_1,p_2\})^n$ is equal to
      $^{i,k}\sigma^j_{[0,n]}$;
    \item
      each interval of $^{i,i}\sigma^j$ of type $(\{p_1,p_2\})^n \cdot
      (\{p_1\}) \cdot (\{p_1,p_2\})^{d-n-1}$ is equal to
      $^{i,k}\sigma^j_{[i-n,i+d-n-1]}$.
    \item
      each interval of $^{i,i}\sigma^j$ of type $(\{p_1,p_2\})^n \cdot
      (\{p_2\}) \cdot (\{p_1,p_2\})^{d-n-1}$ is equal to
      $^{i,k}\sigma^j_{[j-n,j+d-n-1]}$;
  \end{itemize}
  This proves \emph{Property 2}.
\end{proof}

We can now prove that the state sequences ${}^{i,i}\sigma^{j}$ and
${}^{i,k}\sigma^{j}$ are indistinguishable for each formula of type
$\ltl{\alpha R \beta}$ (and, consequently, of type $\ltl{G \alpha}$), with
$\alpha,\beta \in \LTLBP$.
\begin{lemma}
\label{lem:ltlbpunrec}
  Let $\alpha,\beta \in \LTLBP$, and let $d = \max\{D(\alpha),D(\beta)\}$ be
  the maximum between the temporal depths of $\alpha$ and $\beta$. It holds
  that
  %
    $\ {}^{i,i}\sigma^j \models \ltl{\alpha R \beta}$  
   iff  
    $\ {}^{i,k}\sigma^j \models \ltl{\alpha R \beta}$
  %
  , for all $i \ge d$, for all $j \ge i+d$, and for all $k \ge j+d$.
\end{lemma}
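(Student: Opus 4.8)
The plan is to prove the biconditional by unwinding the semantics of $\ltl{\alpha R \beta}$ and reducing each side to statements about fixed-length intervals of the two state sequences, so that \cref{lem:futureltlebrmapping} can be applied directly. Recall that $\sigma,n \models \ltl{\alpha R \beta}$ holds iff either $\sigma,m \models \beta$ for all $m \ge n$, or there exists $m \ge n$ with $\sigma,m \models \alpha$ and $\sigma,\ell \models \beta$ for all $\ell$ with $n \le \ell \le m$. Since $\ltl{\alpha R \beta}$ is evaluated at position $0$ (a formula's language is defined via satisfaction at the first state), we need $^{i,i}\sigma^j,0 \models \ltl{\alpha R \beta}$ iff $^{i,k}\sigma^j,0 \models \ltl{\alpha R \beta}$. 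The key observation is that, because $\alpha,\beta \in \LTLBP$ have temporal depth at most $d$, by the \cref{rem:...} on bounded-past formulas (the one immediately preceding the definition of \canLTLEBR) we have $\sigma,n \models \alpha$ iff $\sigma_{[n-d,n]} \models \alpha$, and likewise for $\beta$. So the truth of $\alpha$ and $\beta$ at any position $n$ depends only on the length-$(d{+}1)$ window ending at $n$.

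First I would establish an auxiliary claim: for every position $n \ge 0$ of $^{i,i}\sigma^j$ there is a position $n' \ge 0$ of $^{i,k}\sigma^j$ (and conversely) such that (i) $^{i,i}\sigma^j,m \models \beta$ for all $m \le n$ iff $^{i,k}\sigma^j,m' \models \beta$ for the corresponding positions $m'$, and (ii) $^{i,i}\sigma^j,n \models \alpha$ iff $^{i,k}\sigma^j,n' \models \alpha$. This follows from \cref{lem:futureltlebrmapping}: the two properties there give, for every window of length $d$ in one sequence, an identical window in the other; combined with the bounded-past remark, identical windows give identical truth values for $\alpha$ and for $\beta$. A small subtlety is that the release semantics also quantifies over the \emph{infinite} tail "$\beta$ holds everywhere from $n$ on'': but past $\max\{i,k,j\}+d$ both sequences are constantly $\{p_1,p_2\}$, so the tail behaviour of $\alpha$ and $\beta$ is literally the same in both sequences beyond that bound, and only finitely many positions (those whose $d$-window touches one of the special indices $i$, $j$, $k$) can behave differently — and for each of those, \cref{lem:futureltlebrmapping} supplies a matching position in the other sequence.

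With this in hand the biconditional follows by a direct case split on which disjunct of the release semantics is witnessed. Suppose $^{i,i}\sigma^j,0 \models \ltl{\alpha R \beta}$. If the first disjunct holds — $\beta$ is true at every position of $^{i,i}\sigma^j$ — then by Property 1 of \cref{lem:futureltlebrmapping} every length-$d$ window of $^{i,k}\sigma^j$ occurs in $^{i,i}\sigma^j$, hence $\beta$ holds at every position of $^{i,k}\sigma^j$ too, so the first disjunct holds there as well. If instead the second disjunct holds with some witness position $m$, I would transfer $m$ to a position $m'$ in $^{i,k}\sigma^j$ where $\alpha$ holds, using Property 2; the harder part is ensuring $\beta$ holds throughout $[0,m']$ in $^{i,k}\sigma^j$ — here I would argue that the positions in $^{i,k}\sigma^j$ below $m'$ that could possibly fail $\beta$ are exactly those whose $d$-window matches a window in $^{i,i}\sigma^j$ below $m$, where $\beta$ was assumed to hold, so no failure is possible. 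The reverse implication is symmetric, swapping Property 1 and Property 2.

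I expect the main obstacle to be bookkeeping the index translation cleanly: \cref{lem:futureltlebrmapping} guarantees a matching \emph{window} but not a monotone or order-preserving map between positions, and the release operator's "until then'' clause is a statement about an entire \emph{downward-closed set} of positions, not a single one. So the delicate point is showing that "$\beta$ everywhere on $[0,m]$ in one sequence'' genuinely transfers to "$\beta$ everywhere on $[0,m']$ in the other'', rather than merely at the endpoint. The clean way around this is to note that, under the hypotheses $i \ge d$, $j \ge i+d$, $k \ge j+d$, the set of "special'' positions (those whose length-$d$ window is of Type 2 or Type 3) is a disjoint union of short blocks around $i$, around $j$, and around $k$, and each block maps as a contiguous unit to the corresponding block in the other sequence (for the shared indices $i$ and $j$ the two sequences are in fact identical on a prefix up to $j+d$); only the block around $k$ is "new'' in $^{i,k}\sigma^j$, and it sits strictly after $j$, which is where the asymmetry between being a model of $\phiG$ and not will ultimately come from. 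Making this block decomposition explicit turns the transfer of the universal "$\beta$ holds until then'' clause into a routine check.
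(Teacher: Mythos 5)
Your proposal is correct and follows essentially the same route as the paper's proof: a case split on the two disjuncts of the release semantics, with the universal ``$\beta$ everywhere'' case handled by Properties 1 and 2 of \cref{lem:futureltlebrmapping}, and the existential witness relocated to a matching position that necessarily falls in the common prefix below $k$ (and below the original witness), which is exactly the paper's ``contraction argument''. Your closing observation about the block structure around $i$, $j$, $k$ and the shared prefix is precisely what makes the transfer of the ``$\beta$ holds until then'' clause go through in the paper as well.
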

\begin{proof}
  Take any value for $i$, $j$, and $k$ such that:
  \begin{enumerate*}[label=(\roman*)]
    \item $i \ge d$, 
    \item $j \ge i+d$,
    \item $k \ge j+d$.
  \end{enumerate*}
  %

  We first prove the left-to-right direction. Suppose that ${}^{i,i}\sigma^{j}
  \models \ltl{\alpha R \beta}$. We divide in cases:
  \begin{enumerate}
    \item\label{item:ltlbpunrec1}
      Suppose that ${}^{i,i}\sigma^{j}, n \models \beta$ for all $n \ge 0$.
      Since $\beta \in \LTLBP$ and $D(\beta) \le d$, it holds that
      ${}^{i,i}\sigma^{j}_{[n-d,n]} \models \beta$, for all $n \ge 0$.
      Suppose by contradiction that there exists some $n^\prime \ge 0$ such
      that ${}^{i,k}\sigma^{j}_{[n^\prime-d,n^\prime]} \models \lnot \beta$.
      By \emph{Property 1} of \cref{lem:futureltlebrmapping}, this means that
      there exists some $n^{\prime\prime} \ge 0$ such that
      ${}^{i,i}\sigma^{j}_{[n^{\prime\prime}-d,n^{\prime\prime}]} \models \lnot
      \beta$.
      But this is a contradiction. Thus, it holds that
      ${}^{i,k}\sigma^{j}_{[n^\prime-d,n^\prime]} \models \beta$ for all
      $n^\prime \ge 0$, that is, for all $n^\prime \ge 0$, and thus
      ${}^{i,k}\sigma^{j} \models \ltl{\alpha R \beta}$.
    \item\label{item:ltlbpunrec2}
      Suppose that $\exists n \ge 0 \suchdot ({}^{i,i}\sigma^{j},n \models
      \alpha \land \forall 0 \le m \le n \suchdot {}^{i,i}\sigma^{j},m \models
      \beta)$. We divide again in cases:
      \begin{enumerate}
          \item\label{item:ltlbpunrec21}
            Suppose that $n<k$. Then ${}^{i,i}\sigma^{j}_{[0,n]}
            = {}^{i,k}\sigma^{j}_{[0,n]}$. Clearly, it holds that
            ${}^{i,k}\sigma^{j},n \models \alpha$ and ${}^{i,k}\sigma^{j},m
            \models \beta$ for all $0 \le m \le n$. Therefore
            ${}^{i,k}\sigma^{j} \models \ltl{\alpha R \beta}$.
          \item\label{item:ltlbpunrec22}
            Suppose that $n \ge k$. In particular, it holds that
            ${}^{i,i}\sigma^{j}_{[n-d,n]} \models \alpha \land \beta$.
            We use a \emph{contraction argument} for proving that in this case
            there exists a smaller index at which the \emph{release} satisfies
            its existential part (\ie the formula $\alpha$).
            Consider the time point $i-1$. It holds that
            ${}^{i,i}\sigma^{j}_{[i-1-d,i-1]} = {}^{i,i}\sigma^{j}_{[n-d,n]}$
            and thus, since ${}^{i,i}\sigma^{j}_{[n-d,n]} \models \alpha \land
            \beta$ and $\alpha,\beta \in \LTLBP$, we have that
            ${}^{i,i}\sigma^{j}_{[i-1-d,i-1]} \models \alpha \land \beta$.
            Moreover, ${}^{i,i}\sigma^{j}_{[0,i-1]}$ is a prefix of
            ${}^{i,i}\sigma^{j}_{[0,n]}$, and thus, given that
            ${}^{i,i}\sigma^{j}_{[p-d,p]} \models \beta$ for all $0 \le p \le
            n$, it holds that ${}^{i,i}\sigma^{j}_{[p-d,p]} \models \beta$ for
            all $0 \le p \le i-1$.
            From this, it follows that ${}^{i,i}\sigma^{j},i-1 \models
            \alpha$ and ${}^{i,i}\sigma^{j},m \models \beta$ for all $0 \le
            m \le i-1$. Since $i-1 < k$, by \cref{item:ltlbpunrec21}, it holds
            that ${}^{i,k}\sigma^{j} \models \ltl{\alpha R \beta}$.
      \end{enumerate}
  \end{enumerate}
  
  We now prove the right-to-left direction. Suppose that ${}^{i,k}\sigma^{j}
  \models \ltl{\alpha R \beta}$. We divide in cases:
  \begin{enumerate}
    \item\label{item:ltlbpunrec3}
      Suppose that ${}^{i,k}\sigma^{j},n \models \beta$.
      This case is specular to \cref{item:ltlbpunrec1}.
    \item\label{item:ltlbpunrec4}
      Suppose that $\exists n \ge 0 \suchdot ({}^{i,k}\sigma^{j},n \models
      \alpha \land \forall 0 \le m \le n \suchdot {}^{i,k}\sigma^{j},m \models
      \beta)$.
      Since $\alpha,\beta \in \LTLBP$ and $D(\alpha),D(\beta) \le d$, it holds
      that $\exists n \ge 0 \suchdot ({}^{i,k}\sigma^{j}_{[n-d,n]} \models
      \alpha \land \forall 0 \le m \le n \suchdot {}^{i,k}\sigma^{j}_{[m-d,m]}
      \models \beta)$. We divide again in cases:
      \begin{enumerate}
        \item\label{item:ltlbpunrec41}
          If $n<k$, then ${}^{i,k}\sigma^{j}_{[0,n]}
          = {}^{i,i}\sigma^{j}_{[0,n]}$ and thus ${}^{i,i}\sigma^{j},n \models
          \alpha$ and ${}^{i,i}\sigma^{j},m \models \beta$ for all $0 \le m \le
          n$, that is ${}^{i,i}\sigma^{j} \models \ltl{\alpha R \beta}$.
        \item\label{item:ltlbpunrec42}
          If $k \le n \le k+d$, then ${}^{i,k}\sigma^{j}_{[n-d,n]}
          = {}^{i,k}\sigma^{j}_{[n-k-i-d,n-k-i]}$ (we used again a contraction
          argument). Since by hypothesis
          ${}^{i,k}\sigma^{j}_{[n-d,n]} \models \alpha$, it holds also that
          ${}^{i,k}\sigma^{j}_{[n-k-i-d,n-k-i]} \models \alpha$.
          Moreover, ${}^{i,k}\sigma^{j}_{[0,n-k-i]}$ is a prefix of
          ${}^{i,k}\sigma^{j}_{[0,n]}$, and thus, since by hypothesis
          ${}^{i,k}\sigma^{j}_{[p-d,p]} \models \beta$ for all $0 \le p \le n$,
          it also holds that ${}^{i,k}\sigma^{j}_{[p-d,p]} \models \beta$ for
          all $0 \le p \le n-k-i$.
          Therefore ${}^{i,k}\sigma^{j}_{[n-k-i-d,n-k-i]} \models \alpha$ and
          ${}^{i,k}\sigma^{j}_{[m-d,m]} \models \beta$ for all $0 \le m \le
          n-k-i$.
          Since $l+n-i < k$, by \cref{item:ltlbpunrec41}, it holds that
          ${}^{i,i}\sigma^{j} \models \ltl{\alpha R \beta}$.
        \item\label{item:ltlbpunrec43}
          Otherwise $n > k+d$. We have that ${}^{i,k}\sigma^{j}_{[n-d,n]}
          = {}^{i,k}\sigma^{j}_{[i-1,i-1-d]}$ (also in this case we used
          a contraction argument). Since by hypothesis
          ${}^{i,k}\sigma^{j}_{[n-d,n]} \models \alpha$, it also hold that
          ${}^{i,k}\sigma^{j}_{[i-1,i-1-d]} \models \alpha$.
          Moreover ${}^{i,k}\sigma^{j}_{[0,i-1]}$ is a prefix of
          ${}^{i,k}\sigma^{j}_{[0,n]}$ and thus, since by hypothesis
          ${}^{i,k}\sigma^{j}_{[p-d,p]} \models \beta$ for all $0 \le p \le n$,
          it also holds that ${}^{i,k}\sigma^{j}_{[p-d,p]} \models \beta$ for
          all $0 \le p \le i-1$.
          Therefore ${}^{i,k}\sigma^{j},i-1 \models \alpha$ and
          ${}^{i,k}\sigma^{j},m \models \beta$ for all $0 \le m \le i-1$.
          Since $i-1<k$, by \cref{item:ltlbpunrec41}, it holds that
          ${}^{i,i}\sigma^{j} \models \ltl{\alpha R \beta}$.
      \end{enumerate}
  \end{enumerate}
  %
\end{proof}

By using \cref{lem:ltlbpunrec} as the proof for the base case, we prove by
induction on the structure of the formula that any formula in \canLTLEBR is not
able to distinguish the state sequences ${}^{i,i}\sigma^j$ and
${}^{i,k}\sigma^j$ for sufficiently large values of $i,j,k$.
In the following, given a formula $\psi \in \canLTLEBR$, we will denote with
$m_\psi$ the maximum number of nested \emph{next} operators in $\psi$, and
with $d_\psi$ the maximum temporal depth between all its \LTLBP-subformulas.

\begin{restatable}{lemma}{canonfutureltlunreac}
\label{lem:canonfutureltlunreac}
  Let $\psi \in \canLTLEBR$. It holds that
  %
    ${}^{i,i}\sigma^j \models \psi$ 
   iff 
    ${}^{i,k}\sigma^j \models \psi$
  %
  , for all $i \ge m_\psi + d_\psi$, for all $j \ge i+d_\psi$, and for all $k
  \ge j+d_\psi$.
\end{restatable}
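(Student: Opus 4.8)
The plan is to invoke \cref{lem:futureltlebrcanonform} to reduce the statement to formulas $\psi \in \canLTLEBR$, and then to argue by structural induction on the canonical form of \cref{def:ltlebrcanonform} (with every $\alpha_i,\beta_i$ a $\LTLBP$ formula). Such a $\psi$ is built from three kinds of \emph{atoms}, namely $\ltl{X}^m\alpha$, $\ltl{X}^m\ltl{G}\alpha$, and $\ltl{X}^m(\alpha \ltl{R} \beta)$ with $\alpha,\beta \in \LTLBP$ and $m \le m_\psi$, using the binary connectives $\land$ and $\lor$. Two elementary facts drive the whole argument: first, ${}^{i,i}\sigma^j$ and ${}^{i,k}\sigma^j$ coincide at every position except $k$; second, under the hypotheses we have $k \ge j+d_\psi > j > i \ge m_\psi+d_\psi$, so the two sequences agree on the entire prefix $[0,k-1]$, which in particular contains $[0,m_\psi]$ and all positions touched by any $\LTLBP$ subformula when it is read within that prefix.

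The inductive step for the Boolean connectives is immediate: since $m_{\psi_1},m_{\psi_2}\le m_\psi$ and $d_{\psi_1},d_{\psi_2}\le d_\psi$, the index hypotheses for $\psi$ imply those for $\psi_1$ and $\psi_2$, so the two equivalences given by the induction hypothesis combine under $\land$ and $\lor$. The base case $\ltl{X}^m\alpha$ is also easy: ${}^{i,i}\sigma^j \models \ltl{X}^m\alpha$ iff ${}^{i,i}\sigma^j, m \models \alpha$, and by the bounded-memory property of $\LTLBP$ recalled above this holds iff ${}^{i,i}\sigma^j_{[m-d,m]} \models \alpha$ with $d = D(\alpha)\le d_\psi$; since $m \le m_\psi < k$, this interval lies in the common prefix, so ${}^{i,i}\sigma^j_{[m-d,m]} = {}^{i,k}\sigma^j_{[m-d,m]}$ and the equivalence follows.

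The remaining two atoms are the crux, and I expect this to be the only real obstacle. Using $\ltl{G\alpha \equiv \false\,R\,\alpha}$, both reduce to a claim about $\ltl{X}^m(\alpha \ltl{R}\beta)$, i.e.\ to comparing ${}^{i,i}\sigma^j, m \models \ltl{\alpha R \beta}$ with ${}^{i,k}\sigma^j, m \models \ltl{\alpha R \beta}$. \cref{lem:ltlbpunrec} delivers exactly this equivalence for $m=0$; its proof, in particular the two contraction arguments that push a witness for $\alpha$ down to position $i{-}1$ together with the interval correspondence of \cref{lem:futureltlebrmapping}, goes through when the evaluation starts at any $m$ with $0 \le m \le m_\psi$, because every universal clause ``for all positions in $[0,n]$'' becomes ``for all positions in $[m,n]$'' and every contracted witness still lands at $i{-}1 \ge m_\psi \ge m$, hence in range. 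This is precisely why the hypothesis on $i$ is $i \ge m_\psi+d_\psi$ rather than the $i \ge d_\psi$ of \cref{lem:ltlbpunrec}: the outer $\ltl{X}^m$ shifts the origin by up to $m_\psi$, and one still needs at least $d_\psi$ further room before position $i$ for \cref{lem:futureltlebrmapping} to apply, while the gaps $j \ge i+d_\psi$ and $k \ge j+d_\psi$ are exactly the ones inherited from \cref{lem:ltlbpunrec}. The cleanest way to make this watertight is to first restate \cref{lem:ltlbpunrec} for an arbitrary starting position bounded by $m_\psi$ (the proof being a verbatim re-run), after which the present lemma is pure bookkeeping.
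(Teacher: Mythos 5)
Your proposal is correct in outline and follows the same induction skeleton as the paper (structural induction over the canonical form, Boolean cases trivial, the whole weight on the release/globally atoms via \cref{lem:ltlbpunrec}), but it handles the \emph{next} operators by a genuinely different device. The paper never moves the evaluation point: for $\ltl{X\psi^\prime}$ it moves the \emph{model}, observing that the suffix ${}^{i,i}\sigma^j_{[1,\infty)}$ is literally the sequence ${}^{i-1,i-1}\sigma^{j-1}$ (and likewise for the $k$-variant), so the inductive hypothesis applies to $\psi^\prime$ with decremented indices, since $m_{\psi^\prime}=m_{\ltl{X\psi^\prime}}-1$; this is exactly where the slack $i \ge m_\psi + d_\psi$ is consumed, and it lets \cref{lem:ltlbpunrec} be reused unchanged at position $0$. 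You instead treat the atoms $\ltl{X}^m\alpha$, $\ltl{X}^m\ltl{G}\alpha$, $\ltl{X}^m(\ltl{\alpha R \beta})$ directly, which forces you to restate \cref{lem:ltlbpunrec} for an arbitrary starting position $m \le m_\psi$. That generalization is true under your hypotheses, but calling its proof a ``verbatim re-run'' hides the one point that actually needs checking: Properties 1 and 2 of \cref{lem:futureltlebrmapping} only provide \emph{some} matching position $n \ge 0$, so in the universal case you must additionally argue that the matching interval can be chosen to end at a position $\ge m$ (true, because intervals occurring at or after $k$ are matched near $i$, near $j$, or in the constant tail, all of which are $\ge m$ under your constraints); and in the contraction step your inequality $i-1 \ge m_\psi$ is not what the hypotheses give: from $i \ge m_\psi + d_\psi$ you only get $i-1 \ge m + d_\psi - 1$, which is $\ge m$ unless $d_\psi = 0$, a degenerate case (propositional $\alpha,\beta$ with $m = m_\psi = i$) that you should dispatch separately or sidestep by falling back on the paper's reindexing trick. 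In short: the paper's shift-the-model argument buys \cref{lem:ltlbpunrec} as a black box, while your shift-the-origin lemma buys a more self-contained atom-by-atom treatment at the cost of redoing the interval-matching and contraction bookkeeping relative to the new origin. (Also, the appeal to \cref{lem:futureltlebrcanonform} is not needed here: the statement already assumes $\psi \in \canLTLEBR$; that reduction belongs to the proof of \cref{th:futureltlebrvssafetyltl}.)
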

\begin{proof}
  Take any value for $i$, $j$, and $k$ such that:
  \begin{enumerate*}[label=(\roman*)]
    \item $i \ge m_\psi + d_\psi$, 
    \item $j \ge i+d_\psi$,
    \item $k \ge j+d_\psi$.
  \end{enumerate*}
  We proceed by induction on the structure of the formula $\psi$.

  For the base case, we consider three cases:
  \begin{enumerate*}[label=(\roman*)]
    \item
      formulas in $\LTLBP$, that is such that all its temporal operators refer
      to the past and are bounded;
    \item
      formulas of type $\ltl{G\alpha}$, where $\alpha \in \LTLBP$;
    \item
      formulas of type $\ltl{\alpha R \beta}$, where $\alpha,\beta \in \LTLBP$;
  \end{enumerate*}

  We consider the case of a formula $\alpha \in \LTLBP$, and suppose that
  ${}^{i,i}\sigma^j \models \alpha$.
  By definition of ${}^{i,i}\sigma^j$ and ${}^{i,k}\sigma^j$, it always holds
  that ${}^{i,i}\sigma^{j}_0 = {}^{i,k}\sigma^{j}_0$. Since $\alpha \in \LTLBP$
  refers only to the current state or to the past, it follows that
  ${}^{i,i}\sigma^{j} \models \alpha$ if and only if ${}^{i,k}\sigma^{j}
  \models \alpha$.

  Consider now the case for $\ltl{\alpha R \beta}$, where $\alpha,\beta \in
  \LTLBP$. Since $m_{\ltl{\alpha R \beta}} = 0$ (\ie the are no \emph{next}
  operators in this formula), we can apply \cref{lem:ltlbpunrec}, having that
  ${}^{i,i}\sigma^j \models \ltl{\alpha R \beta}$ if and only if
  ${}^{i,k}\sigma^j \models \ltl{\alpha R \beta}$.  Since $\ltl{G\alpha}
  = \ltl{\false R \alpha}$, this proves also the case for the \emph{globally}
  operator.

  For the inductive step, since by hypothesis $\psi$ belongs to the canonical
  form of \LTLEBR, it suffices to consider only the case for the \emph{next
  operator}, \emph{conjunctions} and \emph{disjunctions}.

  Consider first the case for the \emph{next} operator, and suppose that
  ${}^{i,i}\sigma^j \models \ltl{X\psi^\prime}$. For any indices $k$, $i$ and $j$
  such that $i \ge m_{\ltl{X\psi^\prime}}+d_{\ltl{X\psi^\prime}}$, $j \ge
  i+d_{\ltl{X\psi^\prime}}$ and $k \ge j+d_{\ltl{X\psi^\prime}}$, we want to
  prove that $^{i,k}\sigma^{j} \models \ltl{X\psi^\prime}$.
  By definition of the next operator, it holds that ${}^{i,i}\sigma^j,1 \models
  \psi^\prime$. Now, let $\tau$ be the state sequence obtained from
  $^{i,i}\sigma^j$ by discarding its initial state, that is $\tau \coloneqq
  {}^{i,i}\sigma^j_{[1,\infty)}$. Obviously, $\tau \models \psi^\prime$.
  We observe that $\tau$ is equal to the state sequence $^{i-1,i-1}\sigma^{j-1}$.
  Since the maximum number $m_{\psi^\prime}$ of nested next operators in
  $\psi^\prime$ is $m_{\ltl{X\psi^\prime}}-1$ (while $\alpha_{\psi^\prime}$
  remains the same), we can apply the inductive hypothesis on $\psi^\prime$,
  having that $^{i-1,k-1}\sigma^{j-1} \models \psi^\prime$.
  By definition of $\tau$, it follows that $^{i,k}\sigma^{j} \models
  \ltl{X}\psi^\prime$.

  We consider now the case for conjunctions, and suppose that ${}^{i,i}\sigma^j
  \models \psi_1 \land \psi_2$, for generic indices $k$, $i$ and $j$ such that
  $i \ge m_{\psi_1 \land \psi_2} + d_{\psi_1 \land \psi_2}$, $j \ge
  i + d_{\psi_1 \land \psi_2}$, and $k \ge j + d_{\psi_1 \land \psi_2}$.
  It holds that ${}^{i,i}\sigma^j \models \psi_1$ and ${}^{i,i}\sigma^j \models
  \psi_2$.  Moreover, $m_{\psi_1} \le m_{\psi_1 \land \psi_2}$ and $m_{\psi_2}
  \le m_{\psi_1 \land \psi_2}$. Similarly, $d_{\psi_1} \le d_{\psi_1 \land
  \psi_2}$ and $d_{\psi_2} \le d_{\psi_1 \land \psi_2}$. This means that we can
  apply the inductive hypothesis both on $\psi_1$ and $\psi_2$ on the
  \emph{current} indices $k$, $i$ and $j$.
  By inductive hypothesis, we have that $^{i,k}\sigma^j \models \psi_1$ and
  $^{i,k}\sigma^j \models \psi_2$. It follows that $^{i,k}\sigma^j \models
  \psi_1 \land \psi_2$.
  The case for $\psi_1 \lor \psi_2$ is specular.
  %
\end{proof}

Thanks to \cref{lem:canonfutureltlunreac}, it is simple to prove the
undefinability of $\ltl{G(p_1 \lor G(p_2))}$ in \LTLEBR, that proves that
\LTLEBR is \emph{strictly less expressive} than \safetyltl.

\begin{theorem}
\label{th:futureltlebrvssafetyltl}
  $\sem{\LTLEBR} \subsetneq \sem{\safetyltl}$.
\end{theorem}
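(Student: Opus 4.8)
The plan is to assemble the proof of $\sem{\LTLEBR} \subsetneq \sem{\safetyltl}$ from two ingredients that are essentially already in place. The inclusion $\sem{\LTLEBR} \subseteq \sem{\safetyltl}$ was observed earlier: every \LTLEBR-formula is syntactically a \safetyltl-formula (it uses only $\ltl{X}$, $\ltl{G}$, $\ltl{R}$ over Boolean combinations). So the entire content is the strictness, for which I would exhibit a witness language in $\sem{\safetyltl}$ that is not in $\sem{\LTLEBR}$. The natural choice, already singled out in the excerpt, is $\phiG \coloneqq \ltl{G(p_1 \lor G(p_2))}$, which is syntactically in \safetyltl, hence $\lang(\phiG) \in \sem{\safetyltl}$.

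Next I would reduce the problem to \canLTLEBR. By \cref{lem:futureltlebrcanonform}, $\sem{\LTLEBR} = \sem{\canLTLEBR}$, so it suffices to show $\lang(\phiG) \notin \sem{\canLTLEBR}$; that is, no formula $\psi \in \canLTLEBR$ satisfies $\lang(\psi) = \lang(\phiG)$. Suppose for contradiction such a $\psi$ exists. Let $d_\psi$ be the maximum temporal depth among its \LTLBP-subformulas and $m_\psi$ the maximum \emph{next}-nesting. Pick $i \coloneqq m_\psi + d_\psi$, $j \coloneqq i + d_\psi$, and $k \coloneqq j + d_\psi$ so that the hypotheses of \cref{lem:canonfutureltlunreac} are met. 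Since $i < j$ and $i < j$ (with $i = i$), \cref{rem:valuesijkmembership} gives ${}^{i,i}\sigma^j \models \phiG$, hence ${}^{i,i}\sigma^j \models \psi$; but since $k > j$, the same remark gives ${}^{i,k}\sigma^j \not\models \phiG$, hence ${}^{i,k}\sigma^j \not\models \psi$. This contradicts \cref{lem:canonfutureltlunreac}, which asserts ${}^{i,i}\sigma^j \models \psi$ iff ${}^{i,k}\sigma^j \models \psi$ for exactly these index choices. Therefore no such $\psi$ exists, so $\lang(\phiG) \notin \sem{\canLTLEBR} = \sem{\LTLEBR}$, while $\lang(\phiG) \in \sem{\safetyltl}$, which together with $\sem{\LTLEBR} \subseteq \sem{\safetyltl}$ yields the strict inclusion.

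The real work of this theorem has already been done in \cref{lem:canonfutureltlunreac} (and its base case \cref{lem:ltlbpunrec}, together with the interval-matching \cref{lem:futureltlebrmapping}); from the perspective of this final statement there is no substantial obstacle — it is a short contradiction argument. If I had to identify a point requiring care, it is only the bookkeeping that the chosen $i, j, k$ genuinely satisfy $i \ge m_\psi + d_\psi$, $j \ge i + d_\psi$, $k \ge j + d_\psi$ so that \cref{lem:canonfutureltlunreac} applies, and simultaneously $i < j$ and $k > j$ so that \cref{rem:valuesijkmembership} separates the two state sequences with respect to $\phiG$; both hold trivially for the choices above since $d_\psi \ge 0$ (and are strict once $d_\psi \ge 1$, which one may assume, or argue the degenerate $d_\psi = 0$ case directly). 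I would also remark, as the excerpt foreshadows, that the underlying reason is structural: an \LTLEBR-formula in canonical form can only constrain, uniformly at every position, a window of bounded length $d_\psi$, whereas $\phiG \equiv \ltl{G(\lnot p_2 \to H(p_1))}$ constrains an unbounded past prefix at each position — so no finite window suffices, which is exactly what the index-stretching in \cref{lem:ltlbpunrec} exploits.
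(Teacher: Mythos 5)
Your proposal is correct and follows essentially the same route as the paper: reduce to \canLTLEBR via \cref{lem:futureltlebrcanonform}, then use \cref{lem:canonfutureltlunreac} on the pair ${}^{i,i}\sigma^j$, ${}^{i,k}\sigma^j$ together with \cref{rem:valuesijkmembership} to derive the contradiction, with the inclusion $\sem{\LTLEBR} \subseteq \sem{\safetyltl}$ coming from the syntactic containment noted earlier. The only difference is that you fix $i,j,k$ by equalities rather than picking any indices above the thresholds, so the degenerate case $d_\psi = 0$ (where $i=j$ would make ${}^{i,i}\sigma^j$ ill-defined) needs the small adjustment you already flag, e.g.\ taking $j \ge i + \max(d_\psi,1)$ and $k \ge j + \max(d_\psi,1)$, which still satisfies the hypotheses of \cref{lem:canonfutureltlunreac}.
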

\begin{proof}
  Consider the formula $\phiG \coloneqq \ltl{G(p_1 \lor G(p_2))}$.
  We prove that there does \emph{not} exists a formula $\psi \in \LTLEBR$
  such that $\lang(\psi) = \lang(\phiG)$.
  We proceed by contradiction. Suppose that there exists a formula $\psi \in
  \LTLEBR$ such that $\lang(\psi) = \lang(\phiG)$.
  By \cref{lem:futureltlebrcanonform}, there exists a formula $\psi^{\prime}
  \in \canLTLEBR$ such that $\lang(\psi) = \lang(\psi^\prime)$.
  Let $m_{\psi^\prime}$ be the maximum number of \emph{nested next} operators
  in $\psi^\prime$, and let $d_{\psi^\prime}$ be the maximum temporal depth
  between all the \LTLBP-subformulas in $\psi^\prime$.
  Let $k$, $i$ and $j$ be three indices such that:
  \begin{enumerate*}[label=(\roman*)]
    \item $i \ge m_{\psi^\prime} + d_{\psi^\prime}$;
    \item $j \ge i + d_{\psi^\prime}$;
    \item and $k \ge j + d_{\psi^\prime}$.
  \end{enumerate*}
  Consider the two state sequences ${}^{i,i}\sigma^j$ and ${}^{i,k}\sigma^j$.
  By \cref{lem:canonfutureltlunreac}, ${}^{i,i}\sigma^j \in \lang(\psi^\prime)$
  if and only if ${}^{i,k}\sigma^j \in \lang(\psi^\prime)$, that is
  ${}^{i,i}\sigma^j \in \lang(\phiG)$ if and only if ${}^{i,k}\sigma^j \in
  \lang(\phiG)$. 
  Since it holds that ${}^{i,i}\sigma^j \in \lang(\phiG)$ but ${}^{i,k}\sigma^j
  \not \in \lang(\phiG)$, this is clearly a contradiction.
\end{proof}

\begin{corollary}
\label{cor:futureltlebrvsltlebr}
  $\sem{\LTLEBR} \subsetneq \sem{\LTLEBRP}$.
\end{corollary}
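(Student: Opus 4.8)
The plan is to read off the corollary from \cref{th:futureltlebrvssafetyltl} together with the expressive equivalence $\sem{\LTLEBRP} = \sem{\safetyltl}$ already established. First I would recall that \cref{th:ltlebrexpr} gives $\sem{\LTLEBRP} = \sem{\LTL} \cap \SAFETY$, and that by the results of Sistla~\cite{sistla1994safety} and Chang \etal~\cite{ChangMP92} one has $\sem{\safetyltl} = \sem{\LTL} \cap \SAFETY$ as well; hence $\sem{\LTLEBRP} = \sem{\safetyltl}$. Next I would observe that every \LTLEBR formula is syntactically a \safetyltl formula (its only temporal operators are the universal ones $\ltl{X}$, $\ltl{G}$, $\ltl{R}$, together with the bounded $\ltl{U^{[a,b]}}$, whose expansion uses only nested $\ltl{X}$ operators, so the negated normal form contains no $\ltl{U}$ or $\ltl{F}$), whence $\sem{\LTLEBR} \subseteq \sem{\safetyltl}$.

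Then I would chain these facts with \cref{th:futureltlebrvssafetyltl}, which states $\sem{\LTLEBR} \subsetneq \sem{\safetyltl}$, to obtain
\[
  \sem{\LTLEBR} \subsetneq \sem{\safetyltl} = \sem{\LTLEBRP},
\]
which is exactly the claim of \cref{cor:futureltlebrvsltlebr}. The language witnessing the strict inclusion is $\lang(\phiG)$ with $\phiG \coloneqq \ltl{G(p_1 \lor G(p_2))}$: it belongs to $\sem{\LTLEBRP}$ via the pure-past reformulation $\ltl{G(\lnot p_2 \to H(p_1))} \in \LTLEBRP$ (cf.\ \cref{eq:phiGequiv}), while \cref{th:futureltlebrvssafetyltl} shows it is not in $\sem{\LTLEBR}$.

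There is no real obstacle here: all the substantive work has already been done, in \cref{lem:canonfutureltlunreac} (the bounded-memory indistinguishability argument) and \cref{th:futureltlebrvssafetyltl}. The only point worth stating carefully is the syntactic inclusion $\LTLEBR \subseteq \safetyltl$, so that the strictness does not accidentally rest on $\phiG$ escaping \safetyltl rather than escaping \LTLEBR — but $\phiG$ is patently a \safetyltl formula, so the strictness is genuinely located between \LTLEBR and \LTLEBRP.
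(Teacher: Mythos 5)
Your proposal is correct and follows exactly the route the paper intends for this corollary: it is obtained by chaining \cref{th:futureltlebrvssafetyltl}, namely $\sem{\LTLEBR} \subsetneq \sem{\safetyltl}$, with the equivalence $\sem{\safetyltl} = \sem{\LTLEBRP}$ already established via \cref{th:ltlebrexpr} and the results of Sistla and Chang \etal, with $\lang(\phiG)$ as the separating language. Your added remark on the syntactic inclusion $\LTLEBR \subseteq \safetyltl$ is consistent with what the paper states in \cref{sub:logicscomparison} and does not change the argument.
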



\section{Conclusions}
\label{sec:conc}

We considered the logic \LTLEBRP, a recently introduced safety fragment of \LTL
with an efficient realizability problem.  The syntax of \LTLEBRP made it
difficult to exactly characterize its expressive power.  We studied the
expressive power of \LTLEBRP and of its pure future fragment, \LTLEBR, and
compare it with other safety fragments of \LTL.
It turned out that \LTLEBRP is expressively complete with respect to the safety
fragment of \LTL, and, consequently, it is expressively equivalent to
\safetyltl.
We found out that past modalities are crucial for the expressive power of
\LTLEBRP. In fact, \LTLEBR is strictly less expressive than full \LTLEBRP.
This was somehow surprising, since it proves that, despite not being
fundamental for the expressiveness of full \LTL, past modalities are crucial
for fragments of \LTL, like, for instance, \LTLEBRP.

\bibliographystyle{eptcs}
\bibliography{biblio}


\end{document}